\title{Selfish Network Creation with Non-Uniform Edge Cost}
\author{Ankit Chauhan\thanks{Algorithm Engineering Group, Hasso Plattner Institute Potsdam, Germany. \texttt{firstname.lastname@hpi.de}}  \and Pascal Lenzner\footnotemark[1] \and Anna Melnichenko\footnotemark[1] \and Louise Molitor\footnotemark[1]}
\date{~}
\newtheorem{theorem}{Theorem}
\newtheorem{lemma}{Lemma}
\newtheorem{corollary}{Corollary}
\newtheorem{observation}[theorem]{Observation}
\newtheorem{remark}{Remark}
\begin{document}

\maketitle

\begin{abstract}
 \noindent Network creation games investigate complex networks from a game-theoretic point of view. 
 Based on the original model by Fabrikant et al.~[PODC'03] many variants have been introduced. However, almost all versions have the drawback that edges are treated uniformly, i.e. every edge has the same cost and that this common parameter heavily influences the outcomes and the analysis of these games. 
 
 We propose and analyze simple and natural parameter-free network creation games with non-uniform edge cost. Our models are inspired by social networks where the cost of forming a link is proportional to the popularity of the targeted node. Besides results on the complexity of computing a best response and on various properties of the sequential versions, we show that the most general version of our model has constant Price of Anarchy. 
 To the best of our knowledge, this is the first proof of a constant Price of Anarchy for any network creation game.

\end{abstract}

\section{Introduction}
Complex networks from the Internet to various (online) social networks have a huge impact on our lives and it is thus an important research challenge to understand these networks and the forces that shape them. The emergence of the Internet was one of the driving forces behind the rise of Algorithmic Game Theory~\cite{Pap01} and it has also kindled the interdisciplinary field of Network Science~\cite{Bar16}, which is devoted to analyzing and understanding real-world networks. Game-theoretic models for network creation lie in the intersection of both research directions and yield interesting insights into the structure and evolution of complex networks. In these models, agents are associated to nodes of a network and choose their neighbors selfishly to minimize their cost. 
Many such models have been proposed, most prominently the models of Jackson \& Wolinsky~\cite{JW96}, Bala \& Goyal~\cite{BG00} and Fabrikant et al.~\cite{Fab03}, but almost all of them treat edges equally, that is, they assume a fixed price for establishing any edge which is considered as a parameter of these games. This yields very simple models but has severe influence on the obtained equilibria and their properties. 

We take a radical departure from this assumption by proposing and analyzing a variant of the Network Creation Game~\cite{Fab03} in which the edges have non-uniform cost which solely depends on the structure of the network. In particular, the cost of an edge between agent $u$ and $v$ which is bought by agent $u$ is proportional to $v$'s degree in the network, i.e. edge costs are proportional to the degree of the other endpoint involved in the edge. Thus, we introduce individual prices for edges and at the same time we obtain a simple model which is parameter-free. 

Our model is inspired by social networks in which the nodes usually have very different levels of popularity which is proportional to their degree. In such networks connecting to a celebrity usually is expensive. Hence, we assume that establishing a link to a popular high degree node has higher cost than connecting to an unimportant low degree node. Moreover, in social networks links are formed mostly locally, e.g. between agents with a common neighbor, and it rarely happens that links are removed, on the contrary, such networks tend to get denser over time~\cite{LKF05}. This motivates two other extensions of our model which consider locality and edge additions only.   

\subsection{Model and Notation}

Throughout the paper we will consider unweighted undirected networks $G = (V,E)$, where $V$ is the set of nodes and $E$ is the set of edges of $G$. Since edges are unweighted, the distance $d_G(u,v)$ between two nodes $u,v$ in $G$ is the number of edges on a shortest path between $u$ and $v$. For a given node $u$ in a network $G$ let $N_k(u)$ be the set of nodes which are at distance at most $k$ from node $u$ in $G$ and let $B_k(u)$ be the set of nodes which are at exactly distance $k$ from node $u$ (the distance-$k$ ball around $u$). We denote the diameter of a network $G$ by $D(G)$, the degree of node $u$ in $G$, which is the number of edges incident to $u$, by $deg_G(u)$. 
We will omit the reference to $G$ whenever it is clear from the context.  

We investigate a natural variant of the well-known Network Creation Game (NCG) by Fabrikant et al.~\cite{Fab03} which we call the \emph{degree price network creation game (degNCG)}. In a NCG the selfish agents correspond to nodes in a network and the strategies of all agents determine which edges are present. In particular, the strategy $S_u$ of an agent $u$ is any subset of $V$, where $v \in S_u$ corresponds to agent $u$ owning the undirected edge $\{u,v\}$. For $v \in S_u$ we will say that agent $u$ buys the edge $\{u,v\}$.
Any strategy vector $\mathbf{s}$ which specifies a strategy for each agent then induces the network $G(\mathbf{s})$, where $$G(\mathbf{s}) = \left(V, \bigcup_{u\in V} \bigcup_{v \in S_u} \{u,v\}\right).$$
Here we assume that $G(\mathbf{s})$ does not contain multi-edges, which implies that every edge has exactly one owner. Since edge-ownership is costly (see below) this assumption trivially holds in any equilibrium network. Moreover, any network $G$ together with an ownership function, which assigns a unique owner for every edge, determines the corresponding strategy vector. Hence, we use strategy vectors and networks interchangeably and we assume that the owner of every edge is known. In our illustrations we indicate edge ownership by directing the edges away from their owner. We will draw undirected edges if the ownership does not matter.

The cost function of agent $u$ in network $G(\mathbf{s})$ consists of the sum of edge costs for all edges owned by agent $u$ and the distance cost, which is defined as the sum of the distances to all other nodes in the network if it is connected and $\infty$ otherwise. The main novel feature which distinguishes the degNCG from the NCG is that each edge has an individual price which is proportional to the degree of the endpoint which is not the owner. That is, if agent $u$ buys the edge $\{u,v\}$ then $u$'s cost for this edge is proportional to node $v$'s degree. For simplicity we will mostly consider the case where the price of edge $\{u,v\}$ for agent $u$ is exactly $v$'s degree without counting edge $\{u,v\}$.
Thus the cost of agent $u$ in network $G(\mathbf{s})$ is     
$$cost_u(G(\mathbf{s})) = \sum_{v \in S_u}(\deg_{G(\mathbf{s})}(v)-1) + dist_{G(\mathbf{s})}(u)\enspace.$$
Note that in contrast to the NCG our variant of the model does not depend on any parameter and the rather unrealistic assumption that all edges have the same price is replaced with the assumption that buying edges to well-connected nodes is more expensive than connecting to low degree nodes.  

Given any network $G(\mathbf{s})$, where agent $u$ has chosen strategy $S_u$. We say that $S_u$ is a best response strategy of agent $u$, if $S_u$ minimizes agent $u$'s cost, given that the strategies of all other agents are fixed. We say that a network $G(\mathbf{s})$ is in pure Nash equilibrium (NE), if no agent can strictly decrease her cost by unilaterally replacing her current strategy with some other strategy. That is, a network $G(\mathbf{s})$ is in NE if all agents play a best response strategy.

Observe that in the degNCG we assume that agents can buy edges to every node in the network. Especially in modeling large social networks, this assumption seems unrealistic. To address this, we also consider a restricted version of the model which includes locality, i.e. where only edges to nodes in distance at most $k$, for some fixed $k\geq 2$, may be bought. We call this version the \emph{$k$-local degNCG} (deg$k$NCG) and its pure Nash equilibria are called $k$-local NE ($k$NE). We will mostly consider the case strongest version where $k=2$. 

We measure the quality of a network $G(\mathbf{s})$ by its \emph{social cost}, which is simply the sum over all agents' costs, i.e. $cost(G(\mathbf{s})) = \sum_{u\in V}cost_u(G(\mathbf{s}))$. Let $\text{\emph{worst}}_n$ and $\text{\emph{best}}_n$ denote the social cost of a ($k$)NE network on $n$ nodes which has the highest and lowest social cost, respectively. Moreover, let $\text{\emph{opt}}_n$ be the minimum social cost of any network on $n$ nodes. We measure the deterioration due to selfishness by the Price of Anarchy (PoA) which is the maximum over all $n$ of the ratio $\frac{\text{\emph{worst}}_n}{\text{\emph{opt}}_n}$. Moreover, the more optimistic Price of Stability (PoS) is the maximum over all $n$ of the ratio $\frac{\text{\emph{best}}_n}{\text{\emph{opt}}_n}$. 

The use case of modeling social networks indicates another interesting version of the degNCG, which we call the \emph{degree price add-only game (degAOG)} and its $k$-local version deg$k$AOG. In these games, agents can only add edges to the network whereas removing edges is impossible. This mirrors social networks where an edge means that both agents know each other.

\subsection{Related Work}
Our model is a variant of the well-known Network Creation Game (NCG) proposed by Fabrikant et al.~\cite{Fab03}. The main difference to our model is that in~\cite{Fab03} it is assumed that every edge has price $\alpha>0$, where $\alpha$ is some fixed parameter of the game. This parameter heavily influences the game, e.g. the structure of the equilibrium networks changes from a clique for very low $\alpha$ to trees for high $\alpha$. For different regimes of $\alpha$ different proof techniques yield a constant PoA~\cite{Fab03,Al14,De07,MS10,MMM13,GHLL16} but it is still open whether the PoA is constant for all $\alpha$. In particular, constant upper bounds on the PoA are known for $\alpha < n^{1-\varepsilon}$, for any fixed $\varepsilon >\frac{1}{\log n}$~\cite{De07}, and if $\alpha > 65n$~\cite{MMM13}. The best general upper bound is $2^{\mathcal{O}\sqrt{\log n}}$~\cite{De07}. 
The dynamics of the NCG have been studied in~\cite{KL13} where it was shown that there cannot exist a generalized ordinal potential function for the NCG. Also the complexity of computing a best response has been studied and its \NP-hardness was shown in~\cite{Fab03}. If agents resort to simple strategy changes then computing a best response can trivially be done in polynomial time and the obtained equilibria approximate Nash equilibria well~\cite{Len12}.

Removing the parameter $\alpha$ by restricting the agents to edge swaps was proposed and analyzed in~\cite{MS12,ADHL13}. The obtained results are similar, e.g. the best known upper bound on the PoA is $2^{\mathcal{O}\sqrt{\log n}}$, there cannot exist a potential function~\cite{L11} and computing a best response is \NP-hard. However, allowing only swaps leads to the unnatural effects that the number of edges cannot change and that the sequential version heavily depends on the initial network.     

Several versions for augmenting the NCG with locality have been proposed and analyzed recently. It was shown that the PoA may deteriorate heavily if agents only know their local neighborhood or only a shortest path tree of the network~\cite{Bil14local,Bil14traceroute}. In contrast, a global view with a restriction to only local edge-purchases yields only a moderate increase of the PoA~\cite{CL15}. 

The idea of having nodes with different popularity was also discussed in the so called celebrity games~\cite{ABDMS16,AM16}. There, nodes have a given popularity and agents buy fixed-price edges to get highly popular nodes within some given distance bound. Hence, this model differs heavily from our model.

To the best of our knowledge, there are only two related papers which analyze a variant of the NCG with non-uniform edge price. In~\cite{CMadH14} agents can buy edges of different quality which corresponds to their length and the edge price depends on the edge quality. Distances are then measured in the induced weighted network. Closer to our model is~\cite{MMO14} where heterogeneous agents, important and unimportant ones, are considered and both classes of agents have different edge costs. Here, links are formed with bilateral agreement~\cite{JW96,CP05} and important nodes have a higher weight, which increases their attractiveness.        

\subsection{Our Contribution}
We introduce and analyze the first parameter-free variants of Network Creation Games~\cite{Fab03} which incorporate non-uniform edge cost. In almost all known versions the outcomes of the games and their analysis heavily depend on the edge cost parameter $\alpha$. We depart from this by assuming that the cost of an edge solely depends on structural properties of the network, in particular, on the degree of the endpoint to which the edge is bought. Essentially, our models incorporate that the cost of an edge is proportional to the popularity of the node to which it connects. This appears to be a realistic feature, e.g. for modeling social networks.  

On the first glance, introducing non-uniform edge cost seems to be detrimental to the analysis of the model. However, in contrast to this, we give a simple proof that the PoA of the degNCG is actually constant. To the best of our knowledge, our model is the first version of the NCG for which this strong statement could be established. A constant PoA is widely conjectured for the original NCG~\cite{Fab03} but its proof is despite serious efforts of the community still an open question. Besides this strongest possible bound on the PoA, which we also generalize to arbitrary linear functions of a node's degree and to the $4$-local version, we prove a PoA upper bound of $\mathcal{O}(\sqrt{n})$ for the deg$2$NCG, where agents are restricted to act within their $2$-neighborhood and we show for this version that computing a best response strategy is \NP-hard. Moreover, we investigate the dynamic properties of the deg($2$)NCG and prove that improving response dynamics may not converge to an equilibrium, that is, there cannot exist a generalized ordinal potential function.

We contrast these negative convergence results by analyzing a version where agents can only add edges, i.e. the deg($2$)AOG, where convergence of the sequential version is trivially guaranteed, and by analyzing the speed of convergence for different agent activation schemes. The restriction to only edge additions has severe impact on the PoA, yielding a $\Theta(n)$ bound, but we show that the impact on the social cost is low, if round-robin dynamics starting from a path are considered, where agents buy their best possible single edge in each step.

\section{Hardness}\label{sec:hardness}
In this section we investigate the computational hardness of computing a cost minimizing strategy, i.e. a best response, in the deg$2$NCG and in the deg$2$AOG.

\begin{lemma}\cite{AK00}
\textsc{Dominating Set} in $q$-regular graphs is \NP-hard.
\end{lemma}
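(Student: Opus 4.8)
The plan is to reconstruct the cited result by a polynomial-time reduction that ``regularizes'' a hard instance of ordinary \textsc{Dominating Set}. As a starting point I would not use \textsc{Dominating Set} on general graphs but the bounded-degree version: \textsc{Dominating Set} restricted to graphs of maximum degree at most $q$ is already \NP-hard (a standard consequence of the usual reductions from \textsc{3-SAT} or from \textsc{Vertex Cover} on cubic graphs, all of which produce instances of bounded degree). So fix $q\ge 3$ and let $(G,k)$ be such an instance with $G=(V,E)$ and $\Delta(G)\le q$. The goal is to build, in polynomial time, a $q$-regular graph $G'$ together with a bound $k'$ such that $G$ has a dominating set of size $k$ if and only if $G'$ has a dominating set of size $k'$.

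The heart of the construction is a constant-size \emph{padding gadget} $H$ with one distinguished \emph{port} vertex. For every $v\in V$ with $\deg_G(v)=d<q$ I would attach $q-d$ fresh private copies of $H$, joining $v$ by a single edge to the port vertex of each copy. The gadget has to be engineered so that: (i) after these attachments every vertex of $G'$ — both the original vertices of $G$ and all gadget vertices — has degree exactly $q$ (so $H$ is itself ``$q$-regular up to the port''); (ii) every dominating set of $G'$ must spend a fixed number $t$ of vertices inside each copy of $H$, so that one can set $k' = k + t\cdot\sum_{v\in V}\bigl(q-\deg_G(v)\bigr)$; and (iii) the gadget is \emph{neutral} for the vertex it is attached to, meaning that no dominating set of $G'$ can exploit a port vertex to dominate $v$ ``for free'' while still using only $t$ vertices in that copy — equivalently, a minimum dominating set of $G'$ restricted to $V$ must dominate all of $V$ using only vertices of $G$. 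Given (i)--(iii), the two directions of the equivalence are routine: a size-$k$ dominating set of $G$ extends to a size-$k'$ one of $G'$ by adding the forced $t$ gadget vertices per copy, and conversely a size-$k'$ dominating set of $G'$ restricts to a size-at-most-$k$ dominating set of $G$.

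The step I expect to be the main obstacle is property (iii): ruling out ``cheating'' solutions in $G'$ that place the port vertex of some copy of $H$ into the dominating set precisely in order to cover $v$, thereby possibly saving a vertex in $G$ and breaking the correspondence. The standard way to neutralize this is to give $H$ an internal ``needy'' vertex adjacent only to the port vertex (and otherwise pad $H$ with a small $q$-regular core): then any dominating set is \emph{forced} to include the port vertex, or that needy vertex, for reasons internal to $H$ and independent of whether $v$ is already dominated, so the port vertex being in the set carries no information about $G$ and cannot reduce the $G$-part of the solution below $k$. One then tunes the internal structure of $H$ so that exactly $t$ of its vertices suffice and are necessary, and checks that all degree constraints in (i) close up to $q$; this is where a few elementary case distinctions on the relation between $d$, $q$, and the size of $H$ have to be verified. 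The whole reduction is polynomial because each of the $O(|V|)$ vertices receives at most $q$ gadgets, each of constant size; and since for $q\le 2$ every graph is a union of paths and cycles, where domination is trivial, the construction and hence the hardness applies exactly for the fixed values $q\ge 3$. An alternative to this regularization route would be a direct reduction from \textsc{Exact Cover by $3$-Sets} or \textsc{3-SAT} in which the regular structure is built into the variable and clause gadgets, but the padding approach above seems the most economical.
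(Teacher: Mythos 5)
The paper offers no proof of this lemma at all --- it is imported verbatim from \cite{AK00}, where hardness (in fact APX-hardness) is shown for cubic graphs, building on the classical Garey--Johnson-style reductions for \textsc{Dominating Set} on (planar) graphs of maximum degree $3$. So you are reconstructing an external result, and your high-level ``regularize a bounded-degree instance by padding gadgets'' plan is a legitimate route. However, as written the argument has a genuine gap exactly at the point you yourself flag as the main obstacle. The one concrete mechanism you offer for property (iii) --- an internal ``needy'' vertex adjacent \emph{only} to the port vertex --- is incompatible with property (i): such a vertex has degree $1$ and therefore cannot exist in a $q$-regular graph for any $q\ge 2$. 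The pendant-vertex trick for forcing a vertex into a dominating set is precisely what regularity forbids, so neutrality must be obtained differently (e.g., by arranging that dominating the gadget's interior already requires $t$ gadget vertices even when the port is dominated from outside, combined with an exchange argument that replaces a port in the solution by the original vertex it is attached to), and none of that is carried out; the entire difficulty is deferred to a gadget that is never constructed and whose only specified feature contradicts the regularity requirement.

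There is also a purely arithmetic obstruction you do not address: your gadget $H$ must have one vertex (the port) of degree $q-1$ inside $H$ and all other vertices of degree $q$, so its degree sum is $q\,|V(H)|-1$. For even $q$ this is odd, hence no such graph exists and the ``one edge per gadget copy'' attachment cannot close the degrees up to $q$; one must instead attach a gadget by several edges, share gadgets between deficient vertices, or restrict to odd $q$, and each of these choices interacts with the count $k'$ and with property (iii). These issues are fixable, and the padding strategy can be pushed through with a fully specified gadget (or one can simply take the cubic case from \cite{AK00} as the base and lift it to $q$-regular graphs), but the proposal as it stands does not constitute a proof.
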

\noindent To prove the \NP-hardness of computing a best response in the deg$2$AOG and the deg$2$NCG we first define a variant of \textsc{Set Cover}~\cite{DKH11}, called \textsc{Exact-$q$ Set Cover} and prove that it is \NP-hard via a reduction from \textsc{Dominating Set} in $q$-regular graphs. The  \textsc{Exact-$q$ Set Cover} problem is defined as follows:  Given a universal set $U=\{1,2,\dots,n\}$ and a collection of sets $\mathcal{A}=\{A_1,\cdots A_{\ell}\}$ where $\forall A_i\in \mathcal{A}, A_i\subseteq U$ and $|A_i|=q$ with $q\geq 4$; the task is to find the minimum subset of $\mathcal{A}$ which covers $U$.
\begin{lemma}
\textsc{Exact-$q$ Set Cover} is \NP-hard.
\end{lemma}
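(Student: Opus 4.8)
The plan is to reduce from \textsc{Dominating Set} in $q$-regular graphs, which is \NP-hard by the preceding lemma. Given a $q$-regular graph $H=(V_H,E_H)$, I would construct an instance of \textsc{Exact-$q$ Set Cover} as follows: let the universal set be $U=V_H$, and for each vertex $v\in V_H$ create a set $A_v = N_H[v]$, the closed neighborhood of $v$ in $H$. Since $H$ is $q$-regular, every open neighborhood $N_H(v)$ has size exactly $q$, so the closed neighborhood $N_H[v]$ has size exactly $q+1$. This is off by one from the required cardinality $q$, so the naive reduction does not immediately land in the \textsc{Exact-$q$ Set Cover} format; the right fix is to instead reduce from \textsc{Dominating Set} in $(q-1)$-regular graphs (still \NP-hard by the cited lemma, for any fixed $q-1\geq 3$, i.e.\ $q\geq 4$), so that $|N_H[v]| = (q-1)+1 = q$ exactly. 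With this choice, $\mathcal{A}=\{A_v : v\in V_H\}$ is a valid \textsc{Exact-$q$ Set Cover} instance: every set has size exactly $q\geq 4$.

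The core of the proof is the equivalence claim: a subfamily $\{A_v : v\in C\}$ covers $U=V_H$ if and only if $C$ is a dominating set of $H$. This is essentially by definition — $\bigcup_{v\in C} N_H[v] = V_H$ says precisely that every vertex of $H$ is either in $C$ or adjacent to a vertex of $C$. Hence the minimum number of sets needed to cover $U$ equals the size of a minimum dominating set of $H$, and an optimal solution to one instance yields an optimal solution to the other in polynomial time. Since the construction of $U$ and $\mathcal{A}$ takes time polynomial in $|V_H|+|E_H|$, this gives a polynomial-time many-one reduction, establishing \NP-hardness of \textsc{Exact-$q$ Set Cover}.

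One subtlety I would address carefully is whether distinct vertices of $H$ can yield identical closed neighborhoods (so that $\mathcal{A}$ might have repeated sets); this is harmless since \textsc{Set Cover} is unaffected by duplicate sets, but if one insists on $\mathcal{A}$ being a genuine collection of distinct sets one can argue that in any $(q-1)$-regular graph with $q-1\geq 2$, $N_H[u]=N_H[v]$ with $u\neq v$ forces $H$ to be a disjoint union of cliques $K_q$, a case for which \textsc{Dominating Set} is trivial and can be excluded or handled directly. The main obstacle is thus not any deep combinatorial difficulty but simply getting the regularity bookkeeping right — ensuring the set sizes come out to exactly $q$ while keeping the source problem \NP-hard, which is why starting from $(q-1)$-regular graphs (valid precisely because $q\geq 4$ makes $q-1\geq 3\geq 3$) is the clean route. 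I would conclude by noting the reduction is approximation-preserving in the natural sense, which is what later sections will need when embedding this into the best-response hardness proof.
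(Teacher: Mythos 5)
Your reduction is the same as the paper's: both map a regular graph's vertices to their closed neighborhoods so that covering the universe is exactly dominating the graph; the paper phrases it as $q$-regular \textsc{Dominating Set} $\to$ \textsc{Exact-$(q{+}1)$ Set Cover} while you re-index to $(q{-}1)$-regular $\to$ \textsc{Exact-$q$}, which is the identical argument. Your extra care about the $q\geq 4$ bookkeeping and duplicate sets is fine but not needed beyond what the paper does.
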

\begin{proof}
We give a polynomial time reduction from \textsc{Dominating Set} in $q$-regular graphs to \textsc{Exact-$(q+1)$-Set Cover}. Given a $q$-regular graph $G=(V,E)$ with $V=\{1,\cdots, n\}$, construct an instance of \textsc{Exact-$(q+1)$-Set Cover} as follows: The universe is $U=V$ and the collection of sets is $\mathcal{A}=\{A_1,\cdots A_{n}\}$ such that $A_i$ consists of vertex $i$ and the vertices adjacent to vertex $i$. Since $G$ is $q$-regular, we have that $|A_i|=q+1$. Now $S$ is an optimal solution for \textsc{Exact-$(q+1)$-Set Cover} for the instance $(U,\mathcal{A})$ if and only if $R = \{i\mid A_i\in S\}$ is a minimum dominating set for the graph $G$.
\end{proof}
\noindent We now use the \NP-hardness of \textsc{Exact-$q$ Set Cover} to prove that the problem of computing a best response in the deg$2$NCG and in the deg$2$AOG is \NP-hard.

\begin{theorem}\label{thm:np2}
\item Computing the best response in the deg2NCG and the deg2AOG is \NP-hard. 
\end{theorem}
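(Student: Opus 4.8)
The plan is to reduce from \textsc{Exact-$q$ Set Cover}, which is \NP-hard by the previous lemma. Given an instance $(U,\mathcal{A})$ with $U=\{1,\dots,n\}$ and $\mathcal{A}=\{A_1,\dots,A_\ell\}$ where every $|A_i|=q\geq 4$, I would build a host network $G$ in which one distinguished agent $x$ must, in order to minimize her cost, buy exactly one edge per chosen set $A_i$ (to a ``set-gadget'' node $a_i$), and where a node $a_i$ being bought by $x$ brings all elements of $A_i$ into $x$'s $2$-neighborhood. Concretely: create a node $a_i$ for each set $A_i$, a node $e_j$ for each element $j\in U$, and connect $a_i$ to $e_j$ exactly when $j\in A_i$. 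Attach enough padding to each $a_i$ so that $\deg(a_i)$ is the same constant for all $i$ (this is why we use the \emph{exact}-$q$ variant — uniform set sizes let us make all the candidate edges cost the same, so that the only thing $x$ trades off is \emph{how many} sets she buys versus the distance savings they yield). The element nodes $e_j$ should be placed far from $x$ except through the $a_i$'s, so that buying a set $a_i$ is the only way for $x$ to get the $e_j\in A_i$ from distance $3$ down to distance $2$; and direct edges from $x$ to $e_j$ should be made unattractive by giving each $e_j$ high degree. Then $x$'s best response buys a minimum-size subcollection covering all of $U$, i.e. a solution to \textsc{Exact-$q$ Set Cover}, so computing it is \NP-hard.

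In more detail, I would set up the gadget so that for any candidate strategy of $x$ consisting of $t$ set-edges that cover a subset $U'\subseteq U$ of elements, $x$'s cost is $t\cdot c + f(|U'|) + (\text{const})$, where $c$ is the common price of a set-edge and $f$ is decreasing in the number of covered elements (each covered element saves exactly one unit of distance). By choosing the padding degrees so that $c$ is strictly smaller than the per-element distance saving times one but large enough that covering an element twice is never worth an extra edge, the cost is minimized precisely by a minimum cover. One has to also rule out ``creative'' strategies: buying edges to $e_j$ directly, buying edges among the $a_i$'s, or (in the deg$2$NCG, where deletion is allowed) deleting host edges. Direct $e_j$-edges are killed by inflating $\deg(e_j)$; edges among $a_i$'s give no distance benefit that a set-edge doesn't already give more cheaply; and in the deg$2$NCG one makes the host edges incident to $x$ either nonexistent or cheap-but-essential so no deletion helps, while all other deletions only hurt $x$'s own distances. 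Since the construction is within distance $2$ of $x$ throughout, the same instance works verbatim for both the deg$2$AOG (add-only) and the deg$2$NCG; I would present the deg$2$AOG argument first and then note that in the deg$2$NCG no deletion or non-local purchase can improve on it.

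The main obstacle will be the bookkeeping that pins down $x$'s best response \emph{exactly} — ensuring that the numbers line up so that one fewer set is always strictly better than covering one more element redundantly, and simultaneously that $x$ never prefers a direct edge to an element or any other off-gadget move. This requires choosing the padding/degree parameters carefully (as polynomials in $n$ and $q$) and then a short case analysis over the qualitative types of edges $x$ could buy (or delete), showing each alternative is dominated. Once the gadget is calibrated, the equivalence ``$x$'s best response $\leftrightarrow$ minimum exact-$q$ cover'' is immediate, and polynomiality of the reduction is clear. A secondary point to check is that the host graph is connected and that its remaining agents' strategies are irrelevant to the hardness claim, which only concerns computing \emph{one} agent's best response.
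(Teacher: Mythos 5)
Your proposal follows essentially the same route as the paper: a reduction from \textsc{Exact-$q$ Set Cover} via a bipartite set/element gadget, a distinguished agent who owns no edges (so deletions are moot in the deg$2$NCG), uniform set sizes so that all candidate set-edges cost the same, and degree inflation on the element nodes to kill direct purchases. The architecture is right and the polynomiality and "works for both games" observations are correct.

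The one place where your accounting, as written, would fail is the claim that the cost of a strategy covering $U'$ is $t\cdot c + f(|U'|)$ where ``each covered element saves exactly one unit of distance.'' With that calibration a set-edge of price $c\approx q$ covering $q$ new elements saves only about $q+1$ units (the $q$ elements plus the set node itself), so buying it is at best break-even and the empty strategy could be a best response, destroying the equivalence with minimum covers. The paper closes this gap by attaching $q+1$ pendant leaves to \emph{each element node}: covering a new element then additionally pulls its $q+1$ pendants one step closer, so a single set-edge of cost $q+1$ already yields a distance saving of at least $q+3$ from any one uncovered element it reaches, making covering strictly profitable while an edge to an already-fully-covered set node (saving $1$, costing $q+1$) remains strictly unprofitable. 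Note that this is the same padding you propose for ``making $\deg(e_j)$ high''; you just never feed it back into the distance-saving side of the ledger. The padding you place on the $a_i$'s is unnecessary (exact-$q$ already equalizes their degrees), and in the $2$-local setting you also get the exclusion of direct element edges for free initially, since the element nodes start at distance $3$ from the distinguished agent; degree inflation is only needed to handle them once a purchased set node brings them into range.
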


\begin{proof} We prove the theorem by giving a polynomial time reduction from the \NP-hard \textsc{Exact-$q$ Set Cover} problem to the problem of computing a best response of an agent $u$ in an instance of the deg2NCG. Consider an instance $I$  of the \textsc{Exact-$q$ Set Cover} problem with universal set $U=\{1,2,\dots,n\}$ and a collection of sets $\mathcal{A}=\{A_1,\cdots A_{\ell}\}$, where $A_i \subseteq U$ for all $A_i \in \mathcal{A}$ and $|A_i|=q$. We create a corresponding instance of the best response problem in the deg2NCG, where the network $G=(V,E)$ is defined as follows: $$V=\{v_1,\cdots, v_n\}\cup \left\{p_1^1,\dots , p_1^{q+1},\dots ,p_n^{1}\dots p_n^{q+1} \right\}\cup\{a_1,\cdots, a_{\ell}\} \cup \{x\}\cup \{u\},$$ where each $v_i$ corresponds to $i\in U$ and each $a_i$ corresponds to $A_i\in \mathcal{A}$. The edge set $E$ is defined as follows: 
\begin{align*}E =& \left\{\{v_i,a_j\} \mid i \in A_j\right\} \cup \left\{\{v_i,p_i^r\} \mid i\in U \text{ and } 1\leq r \leq q+1\right\}\\ &\cup \left\{\{a_j,x\} \mid a_j \in V\right\} \cup \{\{u,x\}\}\enspace.\end{align*} Moreover, we assume that the edge $\{u,x\}$ is owned by agent $x$. See Fig.~\ref{fig:NPhard_degLAOG}.
\begin{figure}[ht]
\center{\includegraphics[width=9cm]{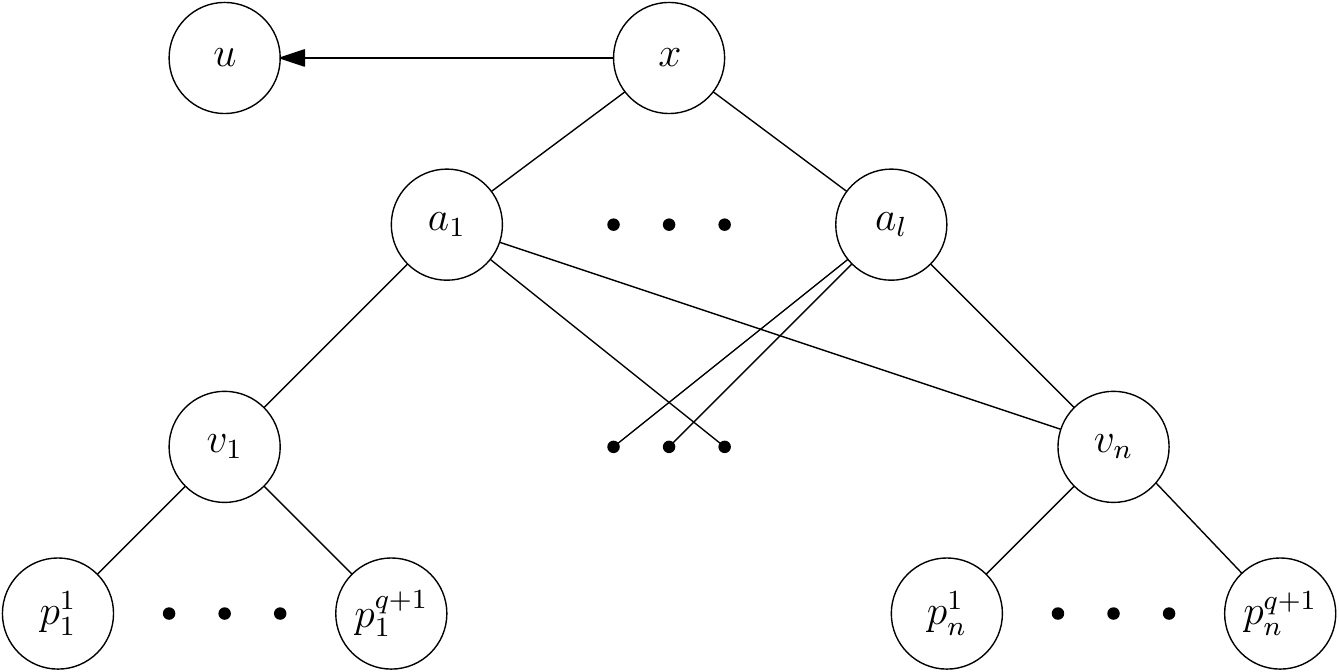}}
\caption{Illustration of the construction used in the reduction.}\label{fig:NPhard_degLAOG}
\end{figure} 

\noindent Note that the degree of every node in $B_2(u)$ is exactly $q+1$ and in $B_3(u)$ at least $q+2$.
Now we observe that if agent $u$ is playing her best response this means agent $u$ has to create a minimum number of edges to nodes in $B_2(u)$ such that every node $v_i$ is in distance $2$ from $u$. This is true since otherwise there exists a node $a_j\in B_2(u)$ to which buying an edge costs $q+1$ and the resulting improvement in distance cost is at least $q+3$ which contradicts the fact that $u$ is playing her best response. Moreover, since $u$ does not own any edges, no deletions or edge swaps are possible. 

Let $S_u^* \subseteq B_2(u)$ denote agent $u$'s best response. Since $S_u^*$ has minimum size such that all $v_i$ nodes are in distance $2$ to agent $u$, the union of the sets $A_j\in \mathcal{A}$ corresponding to the chosen nodes $a_j \in S_u^*$ forms the minimum size solution for the \textsc{Exact-$q$ Set Cover} instance $I$.

Since in the above reduction the best response of agent $u$ consists of only buying edges, this implies \NP-hardness for the deg$2$AOG.
\end{proof}

\section{Analysis of Equilibria}\label{sec:Eq}
We start with the most fundamental statement about equilibria which is their existence. We use the center sponsored spanning star $S_n$, see Fig.~\ref{star}, for the proof and provide some other examples of NE and $2$NE networks in Fig.~\ref{fig:NEsamples}.  
\begin{theorem}\label{thm:existence}
The star $S_n$ is a (k)NE for the deg(k)NCG and the deg(k)AOG for any $k$. 
\end{theorem}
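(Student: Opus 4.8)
The plan is to verify the Nash equilibrium condition directly by analyzing the cost of each agent in the star $S_n$ and showing that no agent can strictly improve via a unilateral deviation. I would first fix notation: let $c$ be the center of $S_n$ and let $L = V \setminus \{c\}$ be the $n-1$ leaves. In the center-sponsored star, the center owns all $n-1$ edges. I would compute the cost of the center: it owns $n-1$ edges, each leading to a leaf of degree $1$, so each edge costs $\deg(v)-1 = 0$; its distance cost is $n-1$ (it is adjacent to everyone). Hence $cost_c(S_n) = n-1$, which is already the minimum possible distance cost for any agent in any connected graph on $n$ nodes, and its edge cost is $0$. So the center has no incentive to deviate — any strategy change can only keep its distance cost at $\ge n-1$ and (weakly) increase edge cost. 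For a leaf $\ell$: it owns nothing, its distance cost is $1 + 2(n-2) = 2n-3$ (distance $1$ to the center, distance $2$ to the other $n-2$ leaves), so $cost_\ell(S_n) = 2n-3$.

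The heart of the argument is the leaf deviation. A leaf $\ell$ considering a deviation can only add edges (it owns nothing, so no deletions or swaps are available, which also immediately handles the add-only game deg(k)AOG once the NCG case is done). Adding an edge from $\ell$ to the center is already present, so the only useful new edges go to other leaves. Buying an edge to another leaf $\ell'$ costs $\deg_{S_n}(\ell') - 1 = 0$ in the current network — but this changes the degree computation, and more importantly I must be careful: the cost is measured in the deviated network. If $\ell$ buys edges to a set $T$ of $t$ other leaves, then in the new network each $\ell' \in T$ has degree $2$, so each such edge costs $\deg(\ell')-1 = 1$, giving edge cost $t$; the distance cost drops from $2n-3$ to $1 + t + 2(n-2-t) = 2n-3-t$. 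So the new cost is $t + (2n-3-t) = 2n-3$, exactly equal — no strict improvement. I should also confirm that buying edges does not help for the $k$-local variant: since the center is within distance $k \ge 2$ and all leaves are within distance $2 \le k$, the locality constraint never forbids any of these moves, so the same computation applies verbatim to the deg(k)NCG and deg(k)AOG.

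The main obstacle, and the step requiring the most care, is making sure the deviation analysis is genuinely exhaustive: a leaf could in principle buy several edges at once, and I must verify that the telescoping "edge cost exactly cancels distance-cost savings" identity holds for an arbitrary deviation set $T$, not just a single edge — which the computation above does, since each bought edge contributes $+1$ to edge cost and $-1$ to distance cost regardless of $t$. One subtlety to double-check: when $\ell$ buys an edge to $\ell'$, does $\ell'$'s degree become exactly $2$, or could $\ell$'s strategy interact with itself to create higher-degree targets? Since all targets are distinct leaves and $\ell$ adds at most one edge to each, every target ends at degree $2$, so the per-edge cost is exactly $1$. With that verified, the equality $cost_\ell = 2n-3$ is preserved under every deviation, the center is trivially optimal, and hence $S_n$ is a (k)NE for all four game variants and every $k \ge 2$ (and for $k$ large enough to be the global game).
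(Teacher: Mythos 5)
Your proposal is correct and follows essentially the same route as the paper's proof: the center cannot delete, swap, or add anything useful, and a leaf's only available deviation is buying edges to other leaves, where each purchased edge costs exactly $1$ (the target's degree becomes $2$) and saves exactly $1$ in distance, so the cost is unchanged. Your version is somewhat more explicit (exact cost computations and the arbitrary deviation set $T$), but the underlying argument is identical.
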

\begin{proof}
 In the center sponsored spanning star $S_n$ the center agent cannot delete or swap any edge since this would disconnect the network. Since the center already has bought the maximum number of edges, no edge purchases are possible. Moreover, no leaf agent can profit from buying any number of edges because only edges to other leaves can be bought, which is a $2$-local move. Such edges have cost of at $1$ which equals the maximum possible distance improvement. Thus no agent has an improving move for any $k$ which implies that $S_n$ is in ($k$)NE.
\end{proof}

\begin{figure}[!h]
\subfigure[The NE network $S_n$]
{\includegraphics[width=0.16\textwidth]{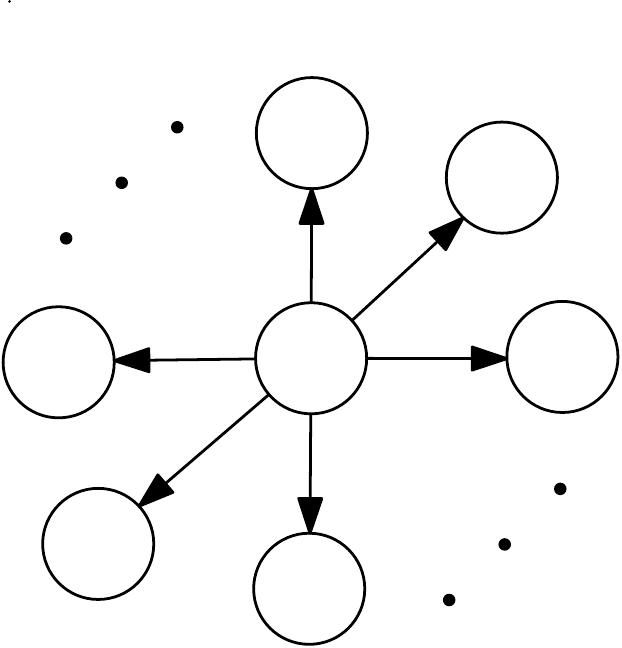}\label{star}}~~\hfill
\subfigure[A NE in the degAOG/degNCG with $D=3$. ]{\includegraphics[width=0.22\textwidth]{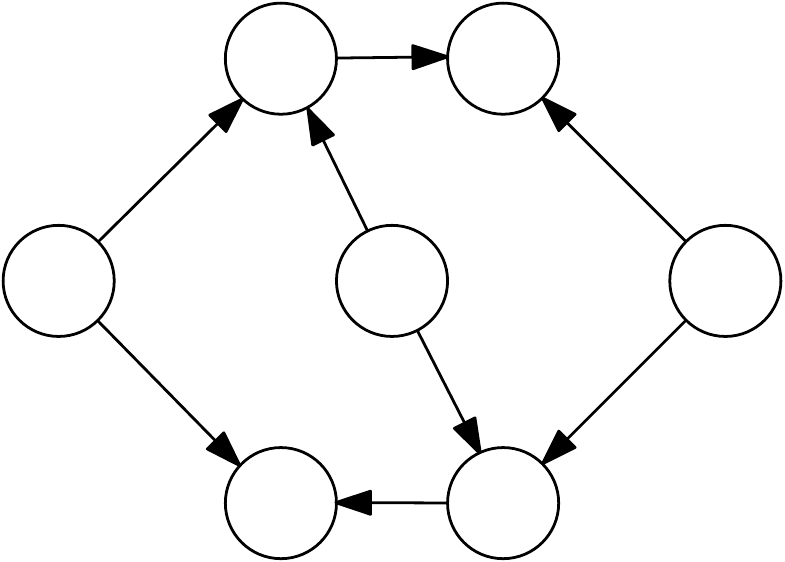}\label{D3}}~~\hfill
\subfigure[A $2$NE in the deg$2$NCG with $D = 4$.]{\includegraphics[width=0.26\textwidth]{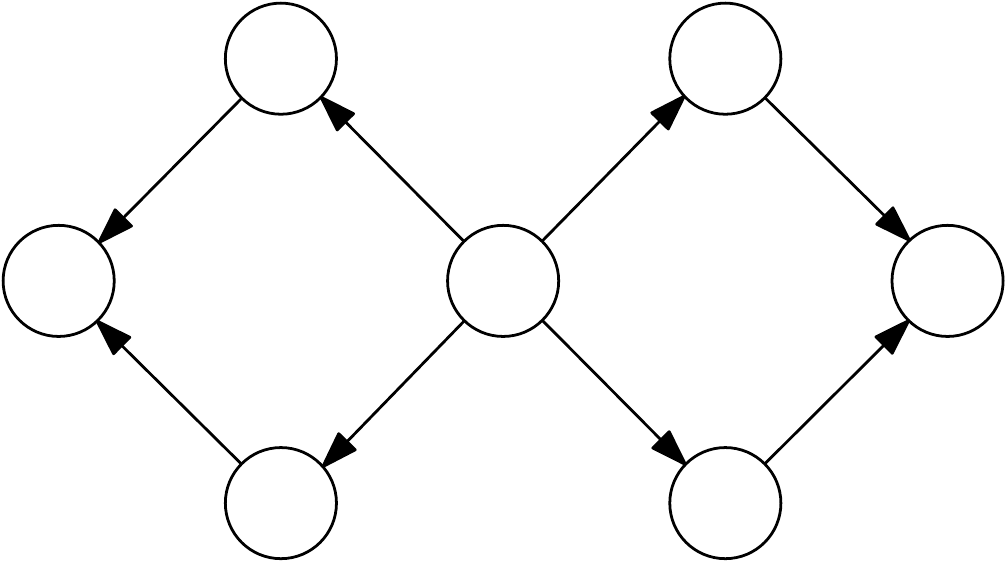}\label{2-vertexconnected}}~~\hfill
\subfigure[A $2$NE in the deg$2$AOG with $D=5$.]{\includegraphics[width=0.29\textwidth]{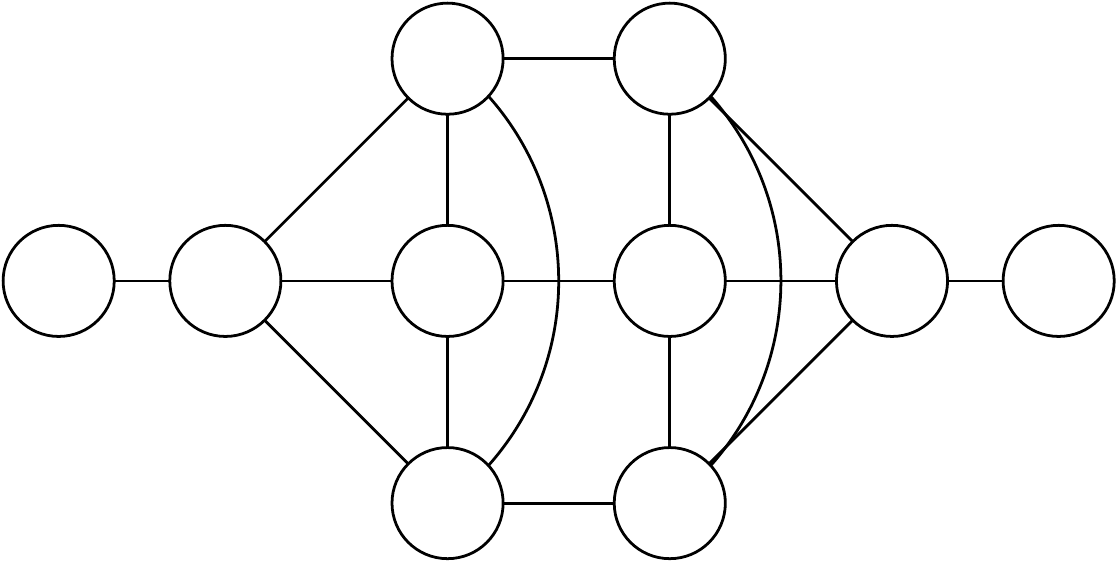}\label{D5_degLAOG}}
\caption{Examples of NE and $2$NE networks}
\label{fig:NEsamples}
\end{figure}

\subsection{Bounding the Diameter of Equilibrium Networks}\label{sec:Diam}
We investigate the diameter of ($2$)NE networks. Analogously to the original NCG~\cite{Fab03}, bounding the diameter plays an important role in bounding the PoA.
\begin{theorem}\label{thm:generaldiam}
Consider variants of the degAOG and the degNCG where the price of any edge $\{u,v\}$ bought by agent $u$ is any linear function of $v$'s degree in $G$, that is, $price_u(\{u,v\})=\beta\cdot~ deg_{G(s)}(v)+\gamma$, where $\beta, \gamma \in \mathbb{R}$. Then the diameter of any NE network in the degAOG and the degNCG is constant.
\end{theorem}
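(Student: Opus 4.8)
The plan is to show that in any equilibrium network an agent who is far from some other node can profitably buy a single edge, and since buying an edge is a legal move both in the degNCG and in the degAOG, this one deviation handles both games at once. First I would record that every NE network is connected --- otherwise some agent has infinite distance cost and could make it finite by buying edges to the remaining components --- so every node has degree at least $1$. Now let $G$ be a NE network, let $D = D(G)$ be its diameter, and fix nodes $u,v$ with $d_G(u,v)=D$. The deviation I analyze is the cheapest imaginable one: agent $u$ keeps $S_u$ and additionally buys the edge $\{u,v\}$. This is well defined (since $D\ge 2$ we have $v\notin S_u$ and $\{u,v\}\notin E$, so no multi-edge appears), and it leaves the cost of all of $u$'s other edges untouched, because those depend only on the degrees of their non-$u$ endpoints, none of which changes.

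Next comes the cost accounting. Adding $\{u,v\}$ increases $\deg(v)$ by one, so $u$ pays $\beta(\deg_G(v)+1)+\gamma$ for the new edge. For the distance side I would apply the triangle inequality twice: after the purchase $d(u,v)=1$ and every neighbour $z$ of $v$ has $d(u,z)\le 2$, whereas before the purchase $d_G(u,v)=D$ and $d_G(u,z)\ge D-1$ for each such $z$. Hence $u$'s distance cost drops by at least $(D-1)+\deg_G(v)(D-3)$. The crucial point --- and the reason non-uniform cost helps rather than hurts here --- is that a high-degree target $v$ is expensive (cost $\sim \beta\deg_G(v)$) but also very useful, since buying into it shortcuts the agent to all $\deg_G(v)$ of its neighbours at once; the factor $\deg_G(v)$ therefore appears on both sides and essentially cancels.

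Putting the two estimates together, the NE condition (no improving move for $u$) forces $\deg_G(v)\,(D-3-\beta)\le \beta+\gamma+1-D$. Using $\deg_G(v)\ge 1$, as soon as $D>\beta+3$ the left side is at least $D-3-\beta$, which already exceeds the right side once $2D>2\beta+\gamma+4$; hence $D\le \max\{\beta+3,\ \beta+\tfrac{\gamma}{2}+2\}$, a constant that depends only on the fixed parameters $\beta,\gamma$ (and in the principal model $\beta=1,\gamma=-1$ a one-line check of the same inequality even gives $D\le 3$). I do not expect a genuine obstacle: the bookkeeping with the $+1$ degree increase and with arbitrary signs of $\beta,\gamma$ is routine, and the only point needing a word of care is that the gain from the neighbours of $v$ may degenerate to $0$ when $D=3$ --- which is harmless, since we only need to rule out large diameters, and the final inequality should simply be allowed to dictate the degenerate regimes ($\beta\le 0$, or $\beta+\gamma$ negative) rather than splitting into cases.
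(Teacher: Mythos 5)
Your proposal is correct and follows essentially the same route as the paper: buy the single edge from one endpoint of a diametral pair to the other, lower-bound the distance saving by $(D-1)+\deg_G(v)(D-3)$ via the neighbours of the target, and let the $\deg_G(v)$ factor cancel against the linear edge price to force $D\in\mathcal{O}(1)$. Your bookkeeping (the $+1$ on the target's degree and the explicit constant $\max\{\beta+3,\,\beta+\gamma/2+2\}$) is slightly more careful than the paper's but changes nothing substantive.
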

\begin{proof}
We consider a NE network $G = (V,E)$ and assume that the diameter $D$ of $G$ is at least 4. Then there exist nodes $a,b \in V$, such that $d_G(a,b)=D$. Therefore, the distance cost of a agent $a$ in $G$ is at least $D+|B_1(b)|(D-1)+|N_2(a)|$. Thus, if agent $a$ buys the edge $\{a,b\}$ then this improves agent $a$'s distance cost by at least $D-1+|B_1(b)|(D-3)$. Since the network $G$ is in NE, the distance cost improvement must be less than agent $u$'s cost for buying the edge $\{a,b\}$:
\begin{align*}
D-1+|B_1(b)|(D-3) &\leq \beta\cdot deg_{G}(b)+\gamma\\
\iff D-1+(D-3)\cdot deg_{G}(b) &\leq \beta\cdot deg_{G}(b)+\gamma\enspace.
\end{align*} 
Solving for $D$ under the assumption $deg_G(b) \geq 1$ yields $$D \leq \frac{(\beta+3)deg_G(b)+\gamma+1}{deg_G(b)+1} < \beta +3 + \frac{\gamma+1}{deg_G(b)+1} \in \mathcal{O}(1).$$
\end{proof}
\noindent Using $\beta=1$ and $\gamma = -1$ yields the edge price for our version of the degNCG and the degAOG. This, and the NE example in Fig.~\ref{D3} yields the following: 
\begin{corollary}\label{thm:diameter}
The diameter of any NE network in the degAOG and degNCG is at most 3 and this upper bound is tight.
\end{corollary}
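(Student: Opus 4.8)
The plan is to obtain $D\le 3$ as a direct specialization of Theorem~\ref{thm:generaldiam} and then to certify tightness by pointing to the $D=3$ equilibrium of Fig.~\ref{D3}.

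First I would recall the inequality established inside the proof of Theorem~\ref{thm:generaldiam}: if $G$ is a NE network of diameter $D\ge 4$ and $a,b\in V$ satisfy $d_G(a,b)=D$, then the fact that agent $a$ has no incentive to buy $\{a,b\}$ forces $D-1+(D-3)\deg_G(b)\le \beta\deg_G(b)+\gamma$. The edge price of the degNCG and the degAOG, namely $price_u(\{u,v\})=\deg_G(v)-1$, is exactly the linear function with $\beta=1$ and $\gamma=-1$, so the inequality specializes to $D-1+(D-3)\deg_G(b)\le \deg_G(b)-1$, i.e. $D\le (4-D)\deg_G(b)$. For $D\ge 4$ the right-hand side is at most $0$ while the left-hand side is at least $4$, a contradiction; hence $D\le 3$. (Equivalently, one may just evaluate the displayed bound of Theorem~\ref{thm:generaldiam}: it gives $D<\beta+3+\frac{\gamma+1}{\deg_G(b)+1}=4$.) This argument treats both games at once, and since every add-only deviation in the degAOG is also a deviation in the degNCG, any NE of the degNCG is in particular a NE of the degAOG, so no separate bookkeeping is needed.

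For tightness I would exhibit the network of Fig.~\ref{D3}, whose diameter is exactly $3$, and verify that it is in NE: for each agent one checks that deleting an owned edge, swapping an owned edge, or buying a new edge (necessarily to a node at distance $2$ or $3$) never strictly decreases that agent's cost, the comparisons using that an edge to a degree-$d$ node costs $d-1$. Since this network is a NE of the degNCG it is simultaneously a NE of the degAOG, so the value $3$ is realized in both models and the upper bound of $3$ cannot be lowered.

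I expect no serious obstacle here: the upper bound is a one-line substitution of $\beta=1,\gamma=-1$ into Theorem~\ref{thm:generaldiam}, and the only mildly tedious part is the finite case analysis confirming that the concrete $D=3$ network of Fig.~\ref{D3} admits no improving move, which is routine.
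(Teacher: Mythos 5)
Your proposal is correct and follows essentially the same route as the paper: the upper bound $D\le 3$ is obtained by substituting $\beta=1$, $\gamma=-1$ into the inequality of Theorem~\ref{thm:generaldiam} (the paper states exactly this one-line specialization), and tightness is certified by the diameter-$3$ equilibrium of Fig.~\ref{D3}, which serves both the degNCG and the degAOG. Your additional observations -- that the deviation used in the theorem is a pure edge purchase available in both games, and that a NE of the degNCG is automatically a NE of the degAOG -- are correct and only make explicit what the paper leaves implicit.
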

\noindent Since in the proof of Theorem~\ref{thm:generaldiam} in the case of $\beta = 1$ and $\gamma = -1$ buying an edge to a node in distance 4 suffices, we get the following statement.
\begin{corollary}\label{cor:diam4local}
Any 4NE network has diameter at most $3$. 
\end{corollary}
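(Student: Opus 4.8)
The plan is to specialize the diameter argument from the proof of Theorem~\ref{thm:generaldiam} (with $\beta=1$, $\gamma=-1$) to the $4$-local setting, using the fact that a \emph{single} $4$-local edge purchase already suffices to reach a contradiction. I argue by contradiction. Suppose $G=(V,E)$ is a $4$NE with diameter $D(G)\ge 4$. Then there are nodes $a,b\in V$ with $d_G(a,b)=D(G)\ge 4$, and along a fixed shortest $a$--$b$ path I pick the node $c$ at distance exactly $4$ from $a$; such a $c$ exists because $D(G)\ge 4$, and $d_G(a,c)=4$ because $c$ lies on a shortest path.

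Next I show that the move ``agent $a$ additionally buys $\{a,c\}$'' is admissible and strictly improving. It is admissible since $c$ is within distance $4$ of $a$ in the current network, so this is a legal $4$-local move; moreover it is an edge addition, hence also legal in the deg$4$AOG. For the improvement I reuse the estimate from the proof of Theorem~\ref{thm:generaldiam} with the pair $(a,c)$ in place of $(a,b)$: buying $\{a,c\}$ reduces $d(a,c)$ from $4$ to $1$ and reduces the distance from $a$ to each of the $\deg_G(c)$ neighbours of $c$ from at least $3$ to at most $2$ (adding an edge never increases any distance), so agent $a$'s distance cost drops by at least $3+\deg_G(c)$, while the price of the new edge is $\deg_G(c)-1$. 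Thus agent $a$'s cost strictly decreases, contradicting that $G$ is a $4$NE. Hence $D(G)\le 3$, which proves the corollary for both the deg$4$NCG and the deg$4$AOG.

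The only step that needs care is the very first one: extracting from $D(G)\ge 4$ a pair of nodes at distance \emph{exactly} $4$, so that the far endpoint of the profitable purchase stays within the $4$-locality bound. Once that is secured, the cost/benefit bookkeeping is identical in spirit to — and in fact a bit slacker than — the computation already done for Theorem~\ref{thm:generaldiam}, since for the value ``$D$''$=4$ the net gain is at least $4$ irrespective of $\deg_G(c)$. Finally, one may note that the bound is tight: every NE is trivially a $k$NE, so the diameter-$3$ NE of Fig.~\ref{D3} is also a $4$NE.
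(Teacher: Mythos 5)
Your proof is correct and is exactly the argument the paper intends: it specializes the computation from Theorem~\ref{thm:generaldiam} to a node $c$ at distance exactly $4$ from $a$ (so the purchase is $4$-local), obtaining a distance-cost gain of at least $3+\deg_G(c)$ against an edge price of roughly $\deg_G(c)$, which is a strict improvement regardless of $\deg_G(c)$. The only nitpick is a harmless off-by-one in the edge price (by the model's definition the price is $v$'s degree \emph{not counting} the new edge, i.e.\ $\deg_G(c)$ rather than $\deg_G(c)-1$), which does not affect the contradiction.
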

\noindent Note that the examples in Fig.~\ref{2-vertexconnected} and Fig.~\ref{D5_degLAOG} show that the diameter in the $2$-local version, i.e. in the deg$2$NCG and the deg$2$AOG, can exceed $3$. We prove a higher upper bound on the diameter for the $2$-local versions.

\begin{theorem}\label{thm_diam_LNE}
The diameter of any 2NE network is in $\mathcal{O}(\sqrt{n})$.
\end{theorem}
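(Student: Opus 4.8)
The plan is to show that in a $2$NE network with diameter $D$, every ``interior'' node on a diametral shortest path has degree at least $D/2$, and then to exhibit $\Omega(D)$ such nodes that are pairwise far apart, so that their neighbourhoods are disjoint; a crude vertex count then forces $n = \Omega(D^2)$, i.e. $D = \mathcal{O}(\sqrt{n})$.

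First I would fix a $2$NE network $G=(V,E)$, which is connected since otherwise some agent has cost $\infty$, and a shortest path $v_0, v_1, \dots, v_D$ realizing the diameter, so that $d_G(v_i,v_j)=|i-j|$ for all $i,j$. For an interior index $m$ with $2 \le m \le D-2$, consider the deviation in which agent $v_{m-2}$ additionally buys the edge $\{v_{m-2},v_m\}$; this is a legal $2$-local move since $d_G(v_{m-2},v_m)=2$. Because the $v_i$ lie on a shortest path, this new edge strictly shortens the distance from $v_{m-2}$ to each of $v_m, v_{m+1},\dots, v_D$ by at least one, a total distance saving of at least $D-m+1$, whereas its marginal cost is just $\deg_G(v_m)$ (the degree of $v_m$ rises by one and the edge is priced at the new degree minus one). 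Since $G$ is a $2$NE this deviation is not improving, which gives $\deg_G(v_m) \ge D-m$. The symmetric deviation by agent $v_{m+2}$ buying $\{v_{m+2},v_m\}$ yields $\deg_G(v_m) \ge m$ in the same way. Hence every interior node satisfies $\deg_G(v_m) \ge \max\{m,\, D-m\} \ge D/2$.

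Next I would pick the path nodes $w_k := v_{5k+2}$ for $k=0,1,\dots,\lfloor (D-4)/5\rfloor$; these are all interior, there are $\Omega(D)$ of them, and any two are at distance at least $5$ on the path. If two of them had a common neighbour they would be within distance $2$ of each other, contradicting this; so the open neighbourhoods $N(w_0), N(w_1), \dots$ are pairwise disjoint subsets of $V$, each of size at least $D/2$. Therefore $n = |V| \ge \sum_k |N(w_k)| \ge \Omega(D)\cdot (D/2) = \Omega(D^2)$, which rearranges to $D = \mathcal{O}(\sqrt{n})$. For the finitely many small values of $D$ not covered by this counting, $D = \mathcal{O}(1)$ trivially.

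The argument is essentially two one-line equilibrium inequalities plus a packing observation, so there is no deep obstacle; the only care needed is bookkeeping — pinning down the exact marginal edge cost (degree before versus after the purchase) and the exact distance saving, and making the index arithmetic for the $w_k$ robust to floors and ceilings. The conceptual heart is that a long diametral path in a $2$NE is forced to have a high-degree interior, and high-degree path nodes that are five apart cannot share neighbours, so there simply is not enough room for them unless $n$ is quadratic in $D$.
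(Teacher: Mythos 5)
Your proof is correct, and its first half is the same key inequality as the paper's: a $2$-local edge purchase along a diametral path would save at least as much distance as the length of the remaining suffix (resp.\ prefix) of the path, so non-improvement in a $2$NE forces every interior node $v_m$ of the path to have degree at least $\max\{m, D-m\} \geq D/2$. (The paper derives the same bound slightly differently, via subtree sizes in a shortest-path tree rooted at one endpoint, which is an equivalent accounting of the distance saving.) Where you genuinely diverge is the second half. The paper converts the degree lower bounds into a vertex count by \emph{upper}-bounding $\deg_G(v_i)$ by $|L_{i-1}|+(|L_i|-1)+|L_{i+1}|$, where $L_j$ are the BFS layers, and summing $\Theta(D)$ such inequalities so that each layer is counted at most three times; this yields $\tfrac{3D^2}{4}-O(D) \leq 3n$. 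You instead select path nodes $w_k = v_{5k+2}$ that are pairwise at distance at least $5$ (so at least $3$ would already suffice), observe that their neighbourhoods must be pairwise disjoint, and pack $\Omega(D)$ disjoint sets of size at least $D/2$ into $V$. Both arguments give $n=\Omega(D^2)$ and hence $D=\mathcal{O}(\sqrt{n})$; yours is arguably cleaner and avoids the layer bookkeeping, at the cost of a slightly worse constant (you use one node in five rather than exploiting every layer), which is irrelevant for the asymptotic statement. Your care about the pricing convention (the purchased edge costs the \emph{old} degree of the target) matches the model, and the disjointness claim is sound since a common neighbour would put two $w_k$'s at distance at most $2$ while they lie at exact distance at least $5$ on a shortest path.
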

\begin{proof}
Consider a $2$NE network $G=(V,E)$ with $|V|=n$ and let $D$ denote its diameter. Consider two nodes $a,b\in V$ such that $d_G(a,b)=D$ and a shortest-path tree $T_a=(V,E_a)$ which is rooted at node $a$ (see Fig.~\ref{s-p_tree}).
\begin{figure}[h!]
\center{\includegraphics[width=7cm]{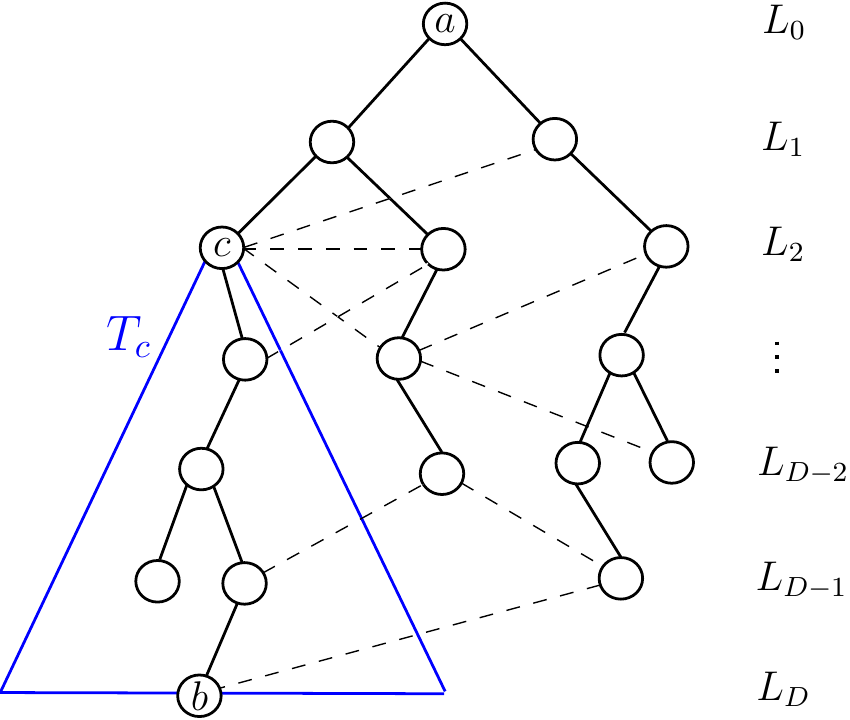}}
\caption{The shortest-path tree $T_a$. Dashed lines denote edges of $G$ which are not in the tree, i.e. the non-tree edges.}\label{s-p_tree}
\end{figure}

\noindent The height of $T_a$ is $D$ and there must be a subtree $T_c$ which contains node $b$ and which has node $c$ as root, where $c$ is chosen such that $d_G(a,c)=2$ and $c$ belongs to the path from $a$ to $b$ in $T_a$. Since the height of $T_c$ is $D-2$ it follows that the number of nodes in $T_c$ must be at least $D-1$. Let $|T_x|$ denote the number of nodes in the subtree of $T_a$ rooted at node $x$. Hence, we have $|T_c| \geq D-1$.

Note that if agent $a$ buys any edge $\{a,x\}$ in network $G$ then this improves $a$'s distance cost by at least $|T_x|$. 
Since $G$ is in $2$NE, we know that buying the edge $\{a,c\}$ is not an improving move for agent $a$ which implies that $|T_c|$ is at most the cost of the edge $\{a,c\}$ which is equal to $deg_G(c)$. Since $|T_c|\geq D-1$ it follows that $deg_G(c)\geq D-1$.  

Let $L_i$ denote the set of nodes which are in distance $i$ from the root $a$ in the tree $T_a$. For example $L_0=\{a\}, c\in L_2$ and $b\in L_D$. 
Thus, we have $D-1\leq deg_G(c)\leq |L_1|+(|L_2|-1)+|L_3|$. 

Analogously, since $G$ is in $2$NE, we have that no agent $v_i$ in layer $L_i$ on the $c-b$ path in $T_a$ can decrease her cost by buying an edge to a node in layer $L_{i+2}$ which is a neighbor of a neighbor in $T_a$. With analogue reasoning as above we get $D-(i-1)\leq deg_G(v_i)\leq |L_{i-1}|+(|L_i|-1)+|L_{i+1}|$.

Note that not only agents from lower layers cannot improve by buying edges towards nodes in upper layers but also agents from upper layers cannot improve by buying edges towards nodes in lower layers. Thus we have $$D-(i-1)\leq deg_G(v_i)\leq |L_{i-1}|+(|L_i|-1)+|L_{i+1}|$$ and $$D-(i-1)\leq deg_G(v_{D-i})\leq |L_{D-i-1}|+(|L_{D-i}|-1)+|L_{D-i+1}|$$  for any $2\leq i \leq \left\lfloor\frac{D}{2}\right\rfloor -1$. 
Summing up all inequalities yields:
$$2\sum_{i=2}^{\left\lfloor\frac{D}{2}\right\rfloor-1}\big(D-(i-1)\big)\leq 3\left(\sum_{i=1}^{D}{|L_i|}-(D-1)\right).$$ For the left side we have 
$$\frac{3D^2}{4}-4D-3 < \left(\left\lfloor\frac{D}{2}\right\rfloor-2\right)\left(2D+1 - \left\lfloor\frac{D}{2}\right\rfloor\right) = 2\sum_{i=2}^{\left\lfloor\frac{D}{2}\right\rfloor-1}\big(D-(i-1)\big)$$
and the right side gives $3\left(\sum_{i=1}^{D}{|L_i|}-(D-1)\right) \leq 3n-3D+3$, which yields 
$$\frac{3D^2}{4}-4D-3< 3n-3D+3 \Rightarrow D < \frac{2}{3}\left(1+\sqrt{9n+19}\right) \in \mathcal{O}(\sqrt{n}). $$
\end{proof}
\subsection{Price of Stability} \label{sec:PoS}
For analyzing the Price of Stability, we have to investigate the network which has the minimum possible social cost. 
\begin{lemma}\label{lem:optdeg}
The center sponsored spanning star $S_n$ is an optimal solution of the deg(k)NCG and the deg(k)AOG for any $k$.   
\end{lemma}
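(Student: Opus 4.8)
The plan is to prove both that $cost(S_n)=2(n-1)^2$ and that every strategy vector on $n$ nodes yields social cost at least $2(n-1)^2$. Since social cost, and hence the notion of an optimal network, is defined identically in the degNCG, the deg$k$NCG, the degAOG and the deg$k$AOG, a single argument handles all of them, and since disconnected networks have infinite social cost it suffices to consider connected ones. Computing $cost(S_n)$ is direct: the center pays $\deg(v)-1=0$ for each of its $n-1$ edges and has distance cost $n-1$, while every leaf buys nothing and has distance cost $1+2(n-2)=2n-3$, giving $(n-1)+(n-1)(2n-3)=2(n-1)^2$.

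For the lower bound I would fix a connected $G=G(\mathbf{s})$ with $n$ nodes and $m$ edges and bound its distance cost and edge cost separately. Each node $u$ has at most $\deg_G(u)$ nodes at distance $1$ and all others at distance at least $2$, so $dist_G(u)\ge 2(n-1)-\deg_G(u)$ and the total distance cost is at least $2n(n-1)-2m$. Whichever endpoint owns an edge $\{u,v\}$, its price is at least $\min(\deg_G(u),\deg_G(v))-1$, so the total edge cost is at least $\sum_{\{u,v\}\in E}\bigl(\min(\deg_G(u),\deg_G(v))-1\bigr)$. Adding the two bounds and using $2(n-1)^2=2n(n-1)-2(n-1)$, the claim reduces to the purely graph-theoretic inequality
\[\sum_{\{u,v\}\in E}\min(\deg_G(u),\deg_G(v))\ \ge\ 3m-2n+2\qquad\text{for every connected graph }G.\]

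I expect this inequality, and in particular getting the additive constant exactly right, to be the main obstacle: the easy bound ``every non-tree edge has both endpoints of degree $\ge 2$, hence contributes at least $2$'' only yields $3m-2n+1$, one unit too weak for the combination above. To recover that unit I would peel $G$ down to a spanning tree, deleting one cycle edge $e=\{u,v\}$ at a time and arguing that each such deletion decreases $\sum_{\{x,y\}}\min(\deg x,\deg y)$ by at least $3$: the deletion only lowers degrees and removes the term $\min(\deg_G(u),\deg_G(v))\ge 2$, and if that term equals $2$, say $\deg_G(u)=2$, then both edges at $u$ lie on the cycle, so $u$'s other neighbour has degree $\ge 2$ and the term of that edge drops from $2$ to $1$. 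A spanning tree contributes at least $n-1$ to the sum (every vertex has degree $\ge 1$), so iterating gives $\sum_{E(G)}\min\ \ge\ (n-1)+3(m-n+1)=3m-2n+2$. Substituting back yields $cost(G(\mathbf{s}))\ge 2(n-1)^2=cost(S_n)$ for every strategy vector, so $S_n$ is optimal.
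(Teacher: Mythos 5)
Your proof is correct, and it is in fact more complete than the one in the paper. Both arguments share the same starting point: the distance cost of any connected $n$-node, $m$-edge network is at least $2n(n-1)-2m$, and the star attains $2(n-1)^2$ with zero edge cost. The paper stops essentially there, declaring that $S_n$ ``meets the lower bound''; but since the bound $2n(n-1)-2m$ \emph{decreases} as $m$ grows, matching it at $m=n-1$ does not by itself exclude that a denser network (whose distance cost can drop as low as $n(n-1)$) is cheaper overall. What you add is precisely the missing half: a lower bound on the edge cost that grows fast enough in $m$ to compensate, reduced to the inequality $\sum_{\{u,v\}\in E}\min(\deg u,\deg v)\ge 3m-2n+2$ for connected graphs, proved by peeling cycle edges one at a time and showing each deletion lowers the sum by at least $3$. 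That peeling step is sound: the deleted term is at least $2$, degrees only decrease, and when the deleted term equals $2$ the degree-$2$ endpoint's other incident edge (whose other end lies on the same cycle and so has degree at least $2$) drops from $2$ to $1$. One small quibble in your motivational aside: the naive bound ``tree edges contribute $\ge 1$, non-tree edges $\ge 2$'' gives $2m-n+1$, not $3m-2n+1$; it is short by $m-n+1$ rather than by one unit. This does not affect your actual argument, which correctly delivers $3m-2n+2$ and hence $cost(G)\ge 2(n-1)^2$ for every connected network, establishing optimality of $S_n$ in all four variants at once.
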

\begin{proof}
Consider an optimal network $G$ with $m$ edges and $n$ nodes. As $G$ has to be connected  we have $m\geq n-1$. Now all the pairs which are not connected by an edge are in distance of at least $2$ and there are $n(n-1)-2m$ many such pairs. Adding the remaining $2m$ pairs with distance $1$ yield distance cost of $2(n(n-1)-2m)+2m = 2n(n-1)-2m$  which is also the lower bound on the distance cost of any graph. Since $S_n$ has diameter 2 and all edges have costs of $0$ because every leaf node has degree 1, the social cost of the center sponsored spanning star $S_n$ meets the above lower bound.        
\end{proof}
\noindent We have shown in the proof of Theorem~\ref{thm:existence} that the center sponsored spanning star $S_n$ is in ($k$)NE for any $k$. With Lemma~\ref{lem:optdeg} this yields the following for $k\geq 2$.
\begin{corollary}\label{thm:PoS}
The Price of Stability of deg(k)NCG and the deg(k)AOG is 1.
\end{corollary}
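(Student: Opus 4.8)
The plan is simply to chain together the two facts already in hand: Theorem~\ref{thm:existence}, which states that the center sponsored spanning star $S_n$ is a $(k)$NE of the deg$(k)$NCG and of the deg$(k)$AOG for every $k$, and Lemma~\ref{lem:optdeg}, which states that this same network $S_n$ attains the social optimum $\text{\emph{opt}}_n$ among all $n$-node networks. Nothing beyond these two statements is required.

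First I would unwind the definitions. By definition $\text{\emph{best}}_n$ is the least social cost of any $(k)$NE network on $n$ nodes, while $\text{\emph{opt}}_n$ is the least social cost of \emph{any} network on $n$ nodes, so trivially $\text{\emph{best}}_n \ge \text{\emph{opt}}_n$ for every $n$. For the reverse inequality, Theorem~\ref{thm:existence} provides a concrete equilibrium, namely $S_n$, and Lemma~\ref{lem:optdeg} tells us $cost(S_n) = \text{\emph{opt}}_n$; hence $\text{\emph{best}}_n \le cost(S_n) = \text{\emph{opt}}_n$. Combining the two bounds gives $\text{\emph{best}}_n = \text{\emph{opt}}_n$ for every $n$, so $\frac{\text{\emph{best}}_n}{\text{\emph{opt}}_n} = 1$ for all $n$ and therefore the Price of Stability, being the maximum of this ratio over $n$, equals $1$. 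The restriction $k\ge 2$ in the statement is only there because the local models deg$(k)$NCG and deg$(k)$AOG are defined for $k\ge 2$; the unrestricted degNCG and degAOG are subsumed as well, since Theorem~\ref{thm:existence} and Lemma~\ref{lem:optdeg} hold for them too. The add-only case needs no separate argument: both cited results already cover deg$(k)$AOG, and the same two inequalities apply verbatim.

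I do not expect any genuine obstacle here; the only point worth checking twice is that the two cited statements refer to the \emph{same} object — the center sponsored spanning star with the center owning all edges — so that the strategy profile witnessing optimality is literally the profile that is shown to be in equilibrium, rather than merely an abstractly optimal graph. Since both Theorem~\ref{thm:existence} and Lemma~\ref{lem:optdeg} are phrased in terms of the center sponsored spanning star $S_n$, this matches, and the corollary follows immediately.
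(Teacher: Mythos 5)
Your proposal is correct and matches the paper's argument exactly: the paper also derives the corollary by combining Theorem~\ref{thm:existence} (the center sponsored spanning star $S_n$ is a $(k)$NE) with Lemma~\ref{lem:optdeg} ($S_n$ is socially optimal), so that $\text{\emph{best}}_n = \text{\emph{opt}}_n$. Your extra care in checking that both cited results refer to the same strategy profile is a sensible sanity check but introduces no new content.
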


\subsection{Price of Anarchy} \label{sec:PoA}
For investigating the quality of the equilibria of our games, we first adapt an important lemma by Fabrikant et al.~\cite{Fab03} to our setting. 
\begin{lemma}\label{lem:Edgecost}
If a (k)NE network $G$ in the deg(k)NCG has diameter $D$, then its social cost is at most $\mathcal{O}(D)$ times the minimum possible social cost. 
\end{lemma}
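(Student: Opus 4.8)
The plan is to run the scheme of Fabrikant et al.\ specialised to our edge prices, using the explicit optimum $\mathrm{opt}_n = 2(n-1)^2 \ge n(n-1)$ supplied by Lemma~\ref{lem:optdeg}. Split the social cost of the $(k)$NE network $G$ of diameter $D$ into its total edge cost $E(G) := \sum_{u\in V}\sum_{v\in S_u}(\deg_G(v)-1)$ and its distance cost $\mathrm{dist}(G) := \sum_{u\in V}\mathrm{dist}_G(u)$, so that $\mathrm{cost}(G)=E(G)+\mathrm{dist}(G)$. The distance part is immediate: every pairwise distance is at most $D$, hence $\mathrm{dist}(G) \le D\, n(n-1) \le D\cdot\mathrm{opt}_n$. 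So everything reduces to showing $E(G) = O(D)\cdot\mathrm{opt}_n$.

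For $E(G)$ I would first isolate the relevant equilibrium inequality. Deleting an owned edge is always an available move in the deg$(k)$NCG (locality restricts only purchases), so for an edge $e=\{u,v\}$ owned by $u$ that is not a bridge, its price $\deg_G(v)-1$ cannot exceed the resulting increase $\Delta_u(e) := \mathrm{dist}_{G-e}(u) - \mathrm{dist}_G(u)$ of $u$'s distance cost. Bridges are cheap in bulk: there are at most $n-1$ of them, each of price at most $n-2$, contributing at most $(n-1)(n-2) \le \mathrm{opt}_n$. So it remains to bound $\sum_{e \text{ non-bridge}} \Delta_{u_e}(e)$ by $O(D)\cdot\mathrm{opt}_n$.

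The heart of the argument is a bound on each $\Delta_u(e)$. For a non-bridge edge $e=\{u,v\}$ owned by $u$, let $V_e$ be the set of vertices whose distance from $u$ strictly increases when $e$ is removed; equivalently, $w\in V_e$ iff every shortest $u$-$w$ path in $G$ uses $e$, i.e.\ starts with the step $u\to v$. Two facts do the work. First (the only non-routine point): $d_{G-e}(u,v)\le 2D$. To see it, walk along a shortest $u$-$v$ path of $G-e$ up to the first vertex $q_j$ on it that lies in $V_e$; every earlier vertex keeps its $G$-distance to $u$, which is at most $D$, so $d_{G-e}(u,q_j)\le D+1$, while a shortest $u$-$q_j$ path of $G$ runs $u,v,\dots,q_j$ and its tail from $v$ to $q_j$ avoids $e$ and has length at most $D-1$, giving $d_{G-e}(q_j,v)\le D-1$; adding these yields $d_{G-e}(u,v)\le 2D$. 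Consequently, for any $w\in V_e$, using that the $v$-to-$w$ tail of a shortest $u$-$w$ path avoids $e$, $d_{G-e}(u,w)\le d_{G-e}(u,v)+d_{G-e}(v,w)\le 2D+(d_G(u,w)-1)$, so the distance from $u$ to $w$ grows by at most $2D-1$, whence $\Delta_u(e)\le(2D-1)\,|V_e|$. Second: for two distinct edges $e,e'$ owned by the same agent $u$, the sets $V_e$ and $V_{e'}$ are disjoint, since a vertex of $V_e$ has $e$ as the first step of each of its shortest $u$-paths and a path has a unique first step; hence $\sum_{e \text{ owned by }u}|V_e|\le n-1$.

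Putting these together: summing $\Delta_u(e)\le(2D-1)|V_e|$ over the non-bridge edges owned by $u$ gives at most $(2D-1)(n-1)$, and summing over all $n$ agents gives $\sum_{e \text{ non-bridge}}\Delta_{u_e}(e)\le(2D-1)\,n(n-1)$. Adding the bridge contribution, $E(G)\le 2D\cdot\mathrm{opt}_n$, and therefore $\mathrm{cost}(G)\le 3D\cdot\mathrm{opt}_n=O(D)\cdot\mathrm{opt}_n$. I expect the inequality $d_{G-e}(u,v)\le 2D$ to be the step requiring the most care: the naive route (from $u$ to a neighbour of $v$ other than $u$, then one more step to $v$) can fail precisely when all of $v$'s neighbours also belong to $V_e$, which is exactly why one should instead follow the detour path until it first re-enters $V_e$. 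Note also that the whole proof uses only non-improvement of single-edge deletions, which is why the statement holds uniformly in $k$.
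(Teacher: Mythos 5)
Your proposal is correct and follows essentially the same route as the paper: split social cost into distance cost (bounded by $Dn(n-1)$) and edge cost, dispose of bridges in bulk, and for each non-bridge edge use the non-profitability of deletion together with the bound $d_{G-e}(u,v)\le 2D$ and the disjointness of the affected vertex sets across edges owned by the same agent. Your "first re-entry into $V_e$" argument is the same crossing-edge idea the paper uses (with its set $R_v$ defined via a fixed shortest-path tree), just stated a bit more carefully.
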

\begin{proof}
The minimum possible social cost is at least $n^2-1 \in \Omega(n^2)$, as the network is connected and every pair of nodes is at least in distance $1$. To bound the social cost of of the ($k$)NE $G$, we bound the social distance cost and social edge cost separately. A trivial upper bound for the social distance cost is $n^2D$, since $G$ has diameter $D$.

For bounding the social edge cost we first consider bridges of $G$, which are edges whose removal will disconnect $G$. There are at most $n-1$ bridges, so the total edge cost of all bridges is at most $\Delta(n-1)\in \mathcal{O}(n^2)$ with $\Delta = \max_{v \in V}deg_G(v)$. Now we will argue that the cost of all non-bridges bought by any agent $u$ is in $\mathcal{O}(nD)$, which implies that the total edge cost of all edges is in $\mathcal{O}(n^2D)$. This yields an upper bound on the social cost of $\mathcal{O}(n^2D+n^2+n^2D)= \mathcal{O}(n^2D)$, completing the proof.    

Consider an agent $u$ and fix agent $u$'s shortest path tree $T_u$, that is, we fix a shortest path from $u$ to all other nodes in $G$. Let $\{u,v\}$ be any non-bridge edge bought by agent $u$. Let $R_v$ be the set of nodes $w$, where the shortest path from $u$ to $w$ in $T_u$ contains the node $v$. 

We first argue that the distance between $u$ and $v$ is at most $2D$ if agent $u$ would remove the edge $\{u,v\}$. Note that removing $\{u,v\}$ cannot disconnect the network, since $\{u,v\}$ is not a bridge.
Let $\{x,w\}$ be the edge on some shortest path from $u$ to $v$ in $G=(V,E\setminus{\{u,v\}})$  where $x\notin R_v$ and $w \in R_v$. As the diameter of $G$ is $D$ and since $x \notin R_v$ there must be a path of length at most $D$ between $u$ and $x$ in $G=(V,E\setminus{\{u,v\}})$. Moreover, there exists a path of length at most $D-1$ between $v$ and $w$ in $G$. This true since $d_G(u,w) \leq D$. Since $x$ is a neighbor of $w$ it follows that the distance between every node $z\in R_v$ and $x$ is at most $D$. Thus removing the edge $\{u,v\}$ increases the diameter to at most $2D$ and agent $u$'s total distance cost by at most $2D|R_v|$.

We know that $G$ is in ($k$)NE in the deg($k$)NCG. Hence, buying the edge $\{u,v\}$ must be profitable for agent $u$, that is, $\deg_G(v)-1\leq 2D |R_v|$. Let $S(u)$ be the set of nodes to which agent $u$ bought a non-bridge edge. Summing up the inequalities for all nodes in $S(u)$ yields
$$\sum_{v\in S(u)}\left(\deg_G(v) -1\right) \leq 2D\sum_{v\in S(u)}|R_v| < 2nD,$$
where the last inequality holds since all sets $R_v$ are disjoint.
This implies that the total edge cost in any ($k$)NE network $G$ is at most $2n^2D+\Delta(n-1)$, which concludes the proof.
\end{proof}
From Corollary~\ref{thm:diameter} and~\ref{cor:diam4local} we know that the diameter of any NE in the degNCG and in the deg$4$NCG is at most $3$. Also, from the Lemma~\ref{lem:Edgecost}  we know that the social cost of any NE network $G$ is at most $\mathcal{O}(D(G))$ times the minimum possible social cost. This implies the following statement.
\begin{theorem}\label{thm:PoA}
The Price of Anarchy of the degNCG and the deg4NCG is in $\mathcal{O}(1)$.
\end{theorem}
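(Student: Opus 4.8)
The plan is to combine the diameter bounds established earlier with the structural lemma relating social cost to diameter. By Corollary~\ref{thm:diameter}, every NE network of the degNCG has diameter at most $3$, and by Corollary~\ref{cor:diam4local}, every $4$NE network of the deg$4$NCG also has diameter at most $3$. Hence in both cases we may fix $D = D(G) \le 3 \in \mathcal{O}(1)$.

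Next I would invoke Lemma~\ref{lem:Edgecost}, which states that any ($k$)NE network $G$ in the deg($k$)NCG has social cost at most $\mathcal{O}(D(G))$ times the minimum possible social cost $\text{opt}_n$. Applying this with $D(G) \le 3$ gives that $\text{worst}_n \le \mathcal{O}(3) \cdot \text{opt}_n = \mathcal{O}(1)\cdot \text{opt}_n$, so the ratio $\frac{\text{worst}_n}{\text{opt}_n}$ is bounded by an absolute constant independent of $n$. Taking the maximum over all $n$ yields $\mathrm{PoA} \in \mathcal{O}(1)$ for both the degNCG and the deg$4$NCG. One should double-check that Lemma~\ref{lem:Edgecost} is indeed applicable to the $4$-local setting: the lemma is phrased for deg($k$)NCG for general $k$, and its proof only uses that edge purchases which remove a non-bridge and re-route are improving moves — which for the bound we need only requires $k \ge 2$, so $k = 4$ is fine. (In fact the only place locality matters is in the earlier diameter corollaries, which already account for it.)

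There is essentially no real obstacle here; the theorem is a direct corollary of the preceding results, and the proof is just a one-line assembly. The only point requiring mild care is making explicit that the $\mathcal{O}(1)$ bound in Lemma~\ref{lem:Edgecost} is uniform, i.e. the hidden constant does not depend on $n$ — it does not, since the lemma's bound $\mathcal{O}(n^2 D)$ versus $\Omega(n^2)$ has a ratio of order $D$ with an absolute implied constant. Plugging $D \le 3$ then closes the argument. So I would simply write: \emph{By Corollary~\ref{thm:diameter} and Corollary~\ref{cor:diam4local} every NE of the degNCG and every $4$NE of the deg$4$NCG has diameter at most $3$; by Lemma~\ref{lem:Edgecost} its social cost is at most a constant times $\text{opt}_n$; hence the PoA is $\mathcal{O}(1)$.}
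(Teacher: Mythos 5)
Your proposal is correct and matches the paper's own argument exactly: the paper derives Theorem~\ref{thm:PoA} by combining the diameter bound of at most $3$ from Corollaries~\ref{thm:diameter} and~\ref{cor:diam4local} with Lemma~\ref{lem:Edgecost}. Your additional remarks on the uniformity of the hidden constant and the applicability of the lemma to the $4$-local setting are sound but not needed beyond what the paper states.
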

\noindent A straightforward adaptation of Lemma~\ref{lem:Edgecost} together with Theorem~\ref{thm:generaldiam} yields:
\begin{corollary}\label{cor:generalconstantPoA}
The Price of Anarchy of variants of the degNCG where the price of any edge $\{u,v\}$ bought by agent $u$ is linear in $v$'s degree in $G$, is constant.  
\end{corollary}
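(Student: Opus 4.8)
The plan is to mimic the structure of the proof of Theorem~\ref{thm:PoA}, only replacing the two ingredients that are specific to the unit-shifted linear price $\beta = 1, \gamma = -1$ with their general-$\beta,\gamma$ counterparts. The argument rests on two facts: (i) the diameter of any NE network is constant, and (ii) the social edge cost is bounded in terms of the diameter and $n$. Fact (i) is exactly Theorem~\ref{thm:generaldiam}, which already covers arbitrary linear price functions. So the only real work is re-proving the analogue of Lemma~\ref{lem:Edgecost} for the price $price_u(\{u,v\}) = \beta\cdot deg_{G}(v)+\gamma$.

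First I would reprove the edge-cost bound. The distance-based lower bound $opt_n \in \Omega(n^2)$ and the trivial upper bound $n^2 D$ on social distance cost are unchanged. For edge cost, bridges are handled as before: there are at most $n-1$ of them and each costs at most $\beta\Delta+\gamma \in \mathcal{O}(n)$ (using $\beta,\gamma$ as fixed constants of the variant and $\Delta \le n-1$), so their total cost is $\mathcal{O}(n^2)$; if $\beta$ or $\gamma$ is negative this bound is only easier. For non-bridge edges bought by an agent $u$, the key geometric observation is identical: removing a non-bridge edge $\{u,v\}$ raises distances within $R_v$ to at most $2D$, so the distance-cost gain from buying $\{u,v\}$ is at most $2D|R_v|$. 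The NE condition now reads $\beta\cdot deg_G(v)+\gamma \le 2D|R_v|$, hence $deg_G(v) \le \tfrac{1}{\beta}(2D|R_v| - \gamma)$ (assuming $\beta>0$; the case $\beta\le 0$ is degenerate and can be dismissed, since then edges toward high-degree nodes are at most as expensive as toward low-degree ones and the price is bounded by a constant anyway). Summing over $v \in S(u)$ and using disjointness of the $R_v$ gives $\sum_{v\in S(u)} deg_G(v) \le \tfrac{1}{\beta}(2nD + |S(u)|\cdot|\gamma|) \in \mathcal{O}(nD)$. Thus the social cost of the edges bought by a single agent is $\mathcal{O}(nD)$ plus the bridge contribution, and summing over all $n$ agents gives a total edge cost of $\mathcal{O}(n^2 D)$.

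Combining, the social cost of any NE network $G$ is $\mathcal{O}(n^2 D(G) + n^2 + n^2 D(G)) = \mathcal{O}(n^2 D(G))$, so it is $\mathcal{O}(D(G))$ times $opt_n$. By Theorem~\ref{thm:generaldiam}, $D(G)$ is bounded by a constant depending only on $\beta$ and $\gamma$, so the PoA is $\mathcal{O}(1)$, as claimed.

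The main obstacle, and the only place requiring care, is bookkeeping the signs and magnitudes of $\beta$ and $\gamma$: the lemma's proof quietly uses that edge prices are positive and monotone in degree. For a clean statement one should either restrict to $\beta>0$ (so that the price is essentially $\Theta(deg)$ up to an additive constant, and the NE inequality can be divided through by $\beta$), or observe separately that whenever the price fails to grow with degree the whole analysis collapses to the constant-price regime. Everything else is a verbatim transcription of the arguments already given for Lemma~\ref{lem:Edgecost} and Theorem~\ref{thm:generaldiam}.
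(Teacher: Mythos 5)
Your proposal is correct and follows exactly the route the paper intends: the paper itself only states that the corollary follows from ``a straightforward adaptation of Lemma~\ref{lem:Edgecost} together with Theorem~\ref{thm:generaldiam}'', and your adaptation (replacing the equilibrium inequality $\deg_G(v)-1\leq 2D|R_v|$ by $\beta\deg_G(v)+\gamma\leq 2D|R_v|$ and summing over the disjoint sets $R_v$) is precisely that adaptation. Your additional care about the signs of $\beta$ and $\gamma$ is a point the paper glosses over, but it does not change the argument.
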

\noindent Using Theorem~\ref{thm_diam_LNE} and Lemma~\ref{lem:Edgecost} yields the following statement. 
\begin{corollary}\label{cor_PoA_degLNCG}
The Price of Anarchy of the deg2NCG is in $\mathcal{O}(\sqrt{n})$. 
\end{corollary}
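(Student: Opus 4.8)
The plan is to obtain the bound directly by chaining the two preceding results, Theorem~\ref{thm_diam_LNE} and Lemma~\ref{lem:Edgecost}, since together they already control the two quantities that govern the Price of Anarchy in this setting: the diameter of an equilibrium network and the dependence of the social cost on that diameter. In other words, all of the real work has already been done, and what remains is to assemble the pieces.

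Concretely, I would fix an arbitrary $2$NE network $G$ on $n$ nodes in the deg$2$NCG and let $D = D(G)$ denote its diameter. Theorem~\ref{thm_diam_LNE} gives $D \in \mathcal{O}(\sqrt{n})$ (indeed $D < \tfrac{2}{3}\bigl(1+\sqrt{9n+19}\bigr)$). Next I would invoke Lemma~\ref{lem:Edgecost} with $k=2$: the social cost of a $2$NE network in the deg$2$NCG with diameter $D$ is at most $\mathcal{O}(D)$ times the minimum possible social cost, which by Lemma~\ref{lem:optdeg} is $\Theta(n^2)$ and equals $\mathrm{opt}_n$. Combining the two estimates yields $\mathrm{cost}(G) \le \mathcal{O}(\sqrt{n}) \cdot \mathrm{opt}_n$. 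Since $G$ was an arbitrary $2$NE, this holds in particular for the $2$NE of highest social cost, so $\mathrm{worst}_n \le \mathcal{O}(\sqrt{n})\cdot\mathrm{opt}_n$, and taking the maximum over $n$ gives $\mathrm{PoA} \in \mathcal{O}(\sqrt{n})$.

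The one point that deserves a moment's care — and it is the closest thing to an obstacle here — is to check that the proof of Lemma~\ref{lem:Edgecost} is actually valid under the $2$-locality restriction, since it was stated for the general deg$(k)$NCG. Inspecting that argument, the only deviation it exploits is the \emph{removal} of a non-bridge edge $\{u,v\}$ already owned by agent $u$; such edge removals (and hence the profitability inequality $\deg_G(v)-1 \le 2D|R_v|$) are available to an agent regardless of any locality constraint, which only limits which edges may be \emph{bought}. The bridge contribution is bounded using solely $\Delta \le n-1$. Hence the lemma applies verbatim to $2$NE networks, and no argument beyond the assembly above is needed.
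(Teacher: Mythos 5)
Your proposal is correct and matches the paper's argument exactly: the corollary is obtained by combining the diameter bound of Theorem~\ref{thm_diam_LNE} with Lemma~\ref{lem:Edgecost}, which is stated (and proved) for the deg$(k)$NCG and hence applies to $k=2$. Your extra check that the lemma's key deviation is an edge \emph{removal}, unaffected by the locality restriction on purchases, is a sensible verification but does not change the route.
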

\noindent We conclude with analyzing the PoA in the deg($k$)AOG. The upper bound is trivially in $\mathcal{O}(n)$, the matching lower bound holds, since a clique is in ($k$)NE for the deg($k$)AOG for any $k$. 
\begin{observation}\label{pro:PoA-AOG}
The Price of Anarchy of deg(k)AOG is in $\Theta(n)$ for any $k$.
\end{observation}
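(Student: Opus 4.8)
The plan is to prove matching $\mathcal{O}(n)$ and $\Omega(n)$ bounds on the ratio $\text{\emph{worst}}_n/\text{\emph{opt}}_n$. By Lemma~\ref{lem:optdeg} the optimal network is the center sponsored spanning star $S_n$, whose social cost equals the universal lower bound $2(n-1)^2\in\Theta(n^2)$ established in that proof; hence it suffices to show that every $(k)$NE network has social cost in $\mathcal{O}(n^3)$ (upper bound) and that some $(k)$NE network has social cost in $\Omega(n^3)$ (lower bound).

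For the upper bound I would bound the social edge cost and the social distance cost of an arbitrary $(k)$NE network $G$ separately. The graph $G$ has at most $\binom{n}{2}$ edges, and each edge $\{u,v\}$ costs its owner $\deg_G(v)-1\le n-2$, so the social edge cost is at most $\binom{n}{2}(n-2)\in\mathcal{O}(n^3)$. For the social distance cost there are $n(n-1)$ ordered pairs of distinct nodes, each at distance at most $n-1$ (indeed at most $3$ for the non-local degAOG by Corollary~\ref{thm:diameter}, and much smaller for the local versions by the bounds of Section~\ref{sec:Diam}, but the trivial bound already suffices here), so the social distance cost is in $\mathcal{O}(n^3)$ as well. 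Adding the two gives $\text{\emph{worst}}_n\in\mathcal{O}(n^3)$, hence $\text{\emph{worst}}_n/\text{\emph{opt}}_n\in\mathcal{O}(n)$ for every $k$.

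For the lower bound I would take the clique $K_n$ and argue that it is a $(k)$NE of the deg$(k)$AOG for every $k$: in an add-only game no agent may delete or swap an edge, and no agent may add one either, since all $\binom{n}{2}$ possible edges are already present and multi-edges are forbidden; hence no agent has an improving move. The social cost of $K_n$ is $\Theta(n^3)$: its distance cost is $n(n-1)$ because the diameter is $1$, and its social edge cost is $\binom{n}{2}\bigl((n-1)-1\bigr)=\tfrac{n(n-1)(n-2)}{2}\in\Theta(n^3)$, since each of the $\binom{n}{2}$ edges has a unique owner who pays the other endpoint's degree minus one, i.e.\ $n-2$. Therefore $\text{\emph{worst}}_n\ge\tfrac{n(n-1)(n-2)}{2}\in\Omega(n^3)$, so $\text{\emph{worst}}_n/\text{\emph{opt}}_n\in\Omega(n)$, which together with the upper bound yields $\Theta(n)$.

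I do not expect a genuine obstacle; the argument is essentially bookkeeping. The two points deserving a word of care are: (i) the comparison should be taken against \emph{connected} equilibria, as a disconnected network has infinite social cost --- this is not an issue in the add-only setting, where the natural assumption is that the dynamics start from a connected network and connectivity is then preserved; and (ii) one should check that $K_n$ is genuinely an equilibrium of the game rather than merely a terminal configuration of some particular dynamics, which is immediate since the add-only restriction rules out every conceivable deviation.
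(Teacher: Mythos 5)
Your proposal is correct and follows the same route as the paper: the upper bound is the trivial $\mathcal{O}(n^3)$ social-cost bound against the $\Theta(n^2)$ optimum, and the lower bound comes from observing that the clique $K_n$ is a $(k)$NE of the add-only game with social cost $\Theta(n^3)$ driven by its edge cost. You merely spell out the bookkeeping that the paper leaves implicit.
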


\section{Dynamics}
In this section we consider the dynamic properties of the sequential version of the deg($k$)NCG and the deg($k$)AOG. The sequential version corresponds to an iterative process, called \emph{improving response dynamics (IRD)}, which starts with some initial strategy vector $\mathbf{s}$ and its corresponding initial network $G(\mathbf{s})$ and then agents are activated one at a time according to some activation scheme, e.g. a random or adversarially chosen move order or round-robin activation, and active agents are allowed to myopically update their current strategy. They will do so only if the new strategy yields strictly less cost than their current strategy. For the deg($2$)AOG we will also consider the \emph{best single edge dynamics}, which is a special case of the improving response dynamics, in which active agents buy the best possible single edge, if this strictly decreases their current cost.   

The most important dynamic property of a game is the finite improvement property (FIP)~\cite{MS96}, which states that any sequence of improving moves must be finite. The seminal paper~\cite{MS96} established that having the FIP is equivalent to being a generalized ordinal potential game. Thus, games having the FIP are guaranteed to converge to an equilibrium under improving move dynamics.

\subsection{Dynamics in the deg($k$)NCG}\label{sec:dyn_NCG}
We investigate the convergence properties of the deg($k$)NCG and prove that the deg($k$)NCG may not converge under improving move dynamics.

\begin{theorem}\label{thm:IRC}
 The deg(k)NCG does not have the FIP for any $k>1$, which implies that these games cannot have a generalized ordinal potential function.
\end{theorem}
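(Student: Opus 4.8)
The plan is to exhibit an explicit \emph{improving response cycle}: a finite sequence of strategy vectors $\mathbf{s}_0,\mathbf{s}_1,\dots,\mathbf{s}_t=\mathbf{s}_0$ in which each $\mathbf{s}_{i+1}$ arises from $\mathbf{s}_i$ by a single agent replacing her strategy with one that strictly decreases her own cost, i.e. $cost_u(G(\mathbf{s}_{i+1})) < cost_u(G(\mathbf{s}_i))$ for the active agent $u$. By the characterization of Monderer and Shapley~\cite{MS96}, the mere existence of such a cycle shows that the game lacks the FIP and therefore admits no generalized ordinal potential function, which is exactly the statement. I would build the cycle inside the deg$2$NCG, using only moves that add or delete a single edge between two nodes at distance at most $2$. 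Since $N_2(u)\subseteq N_k(u)$ for every $k\ge 2$ and the cost function does not depend on $k$, every move of the cycle is also an available improving move in the deg$k$NCG, so one construction handles all $k>1$ at once.

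For the gadget I would take a network on a constant number of nodes, consisting of a fixed \emph{backbone} plus one or two \emph{active} agents. The backbone is designed never to be touched along the cycle and to pin down the degrees of a few pivotal nodes together with the pairwise distances among them; this is what makes the relevant edge prices $\deg_G(v)-1$ and the distance savings concretely computable. The active agents then alternately buy and delete edges to backbone nodes: at one point of the cycle a pivotal neighbour has low enough degree (cheap edge) or yields a large enough distance improvement that buying the edge strictly helps; after the other active agent has subsequently altered a degree (hence an edge price) or shifted some distances, that same edge has become a liability and deleting it strictly helps; iterating the two phases closes the loop back to $\mathbf{s}_0$. All remaining agents stay passive throughout, which is permissible because the FIP only requires a single infinite improving sequence. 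One could try to transplant the known non-potential constructions for the original NCG~\cite{KL13}, but the non-uniform edge cost changes the incentives, so a fresh gadget tailored to the degree price is likely cleaner.

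The bulk of the work, and the main obstacle, is the bookkeeping: for each of the constantly many moves one must verify the strict cost decrease, and the ``buy'' inequality and the ``delete'' inequality for the same edge point in opposite directions. Making all of them hold simultaneously forces a careful choice of backbone so that (i) the degree of each pivotal node — and hence the price paid for an edge to it — oscillates across the cycle, or the set $R_v$ of nodes whose shortest path from the active agent routes through $v$ changes enough between phases, and (ii) the network stays connected after every deletion, so no agent's distance cost ever jumps to $\infty$. A secondary thing to check is that every intermediate move is genuinely $2$-local, since toggling edges can shift distances among the active agents and the backbone. Once one such cycle is written down explicitly and its moves are checked one by one, the theorem and the nonexistence of a generalized ordinal potential function follow immediately from~\cite{MS96}.
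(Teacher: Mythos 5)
Your overall strategy is exactly the paper's: exhibit an explicit improving response cycle consisting only of $2$-local moves, note that $N_2(u)\subseteq N_k(u)$ so the same cycle lives in the deg$k$NCG for every $k>1$, and invoke the Monderer--Shapley equivalence between the FIP and the existence of a generalized ordinal potential. All of that framing is correct.

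The problem is that you never actually construct the cycle, and the construction \emph{is} the proof. Everything you write after the first paragraph is a description of what a gadget would have to achieve (a backbone that pins down degrees and distances, pivotal nodes whose edge prices $\deg_G(v)-1$ oscillate, connectivity preserved after deletions, $2$-locality of every move) together with an acknowledgement that ``the bulk of the work'' is verifying the constantly many strict inequalities. That is a research plan, not an argument: the existence of such a cycle is precisely the nontrivial content of the theorem, and it cannot be assumed. Indeed, the paper shows in the very next subsection that the add-only variant of the same game trivially \emph{does} have the FIP, so whether a cycle exists is genuinely sensitive to the move set and the cost structure; a priori the deg$(k)$NCG could have been a potential game. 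The paper's proof closes this gap with a concrete ten-node example and a six-step cycle $G_1,\dots,G_6$ in which three agents perform $2$-local edge swaps (each move simultaneously buys one edge and deletes another), with the cost change of each mover computed explicitly (e.g.\ edge cost $1\to 3$ against distance cost $23\to 20$), and with an isomorphism argument showing $G_4,G_5,G_6$ mirror $G_1,G_2,G_3$ so the sequence closes up. Until you exhibit such a network and verify each of the finitely many inequalities, you have not proved the statement.
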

\begin{proof}
 We prove the statement by providing an improving response cycle, which is a cyclic sequence of networks where neighboring networks differ only by the strategy of one agent and this strategy change is an improving response for the respective agent. See Fig.~\ref{fig:improvingmovecycle}.
 \begin{figure}
  \centering
  \includegraphics[width=12cm]{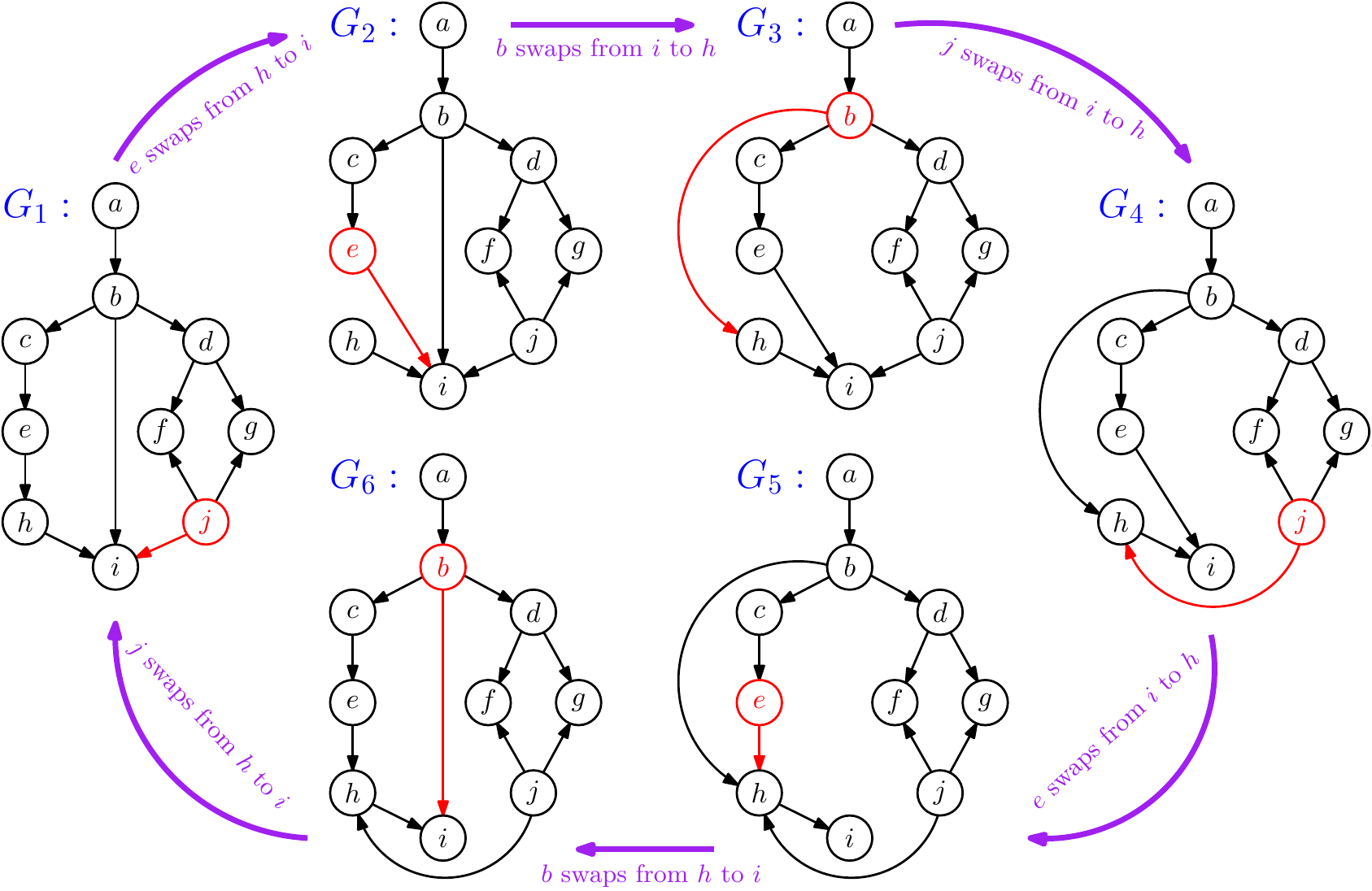}
  \caption{Example of an improving response cycle for the deg($k$)NCG.}
  \label{fig:improvingmovecycle}
 \end{figure}
The improving move cycle consists of six steps $G_1,\dots,G_6$ and the transition from step $G_i$ to $G_{i+1 \bmod 6}$ is an improving local move by some agent. Since all these improving moves are $2$-local, this proves the statement for both the deg$k$NCG and the degNCG for any $k>1$.

In network $G_1$ agent $e$ has edge cost of $1$ and distance cost of $23$ which yields a total cost of $24$. By buying the edge to node $i$ and removing the edge to node $h$ ($e$ ``swaps'' her edge from $h$ to $i$), agent $e$ can decrease her cost by $1$ since her edge cost in $G_2$ is $3$ and her distance cost is $20$. 

In network $G_2$ agent $b$ has edge cost of $6$ and distance cost of $14$ which yields total cost of $20$. By buying the edge to node $h$ and removing the edge to node $i$, agents $b$'s edges cost decreases to $4$ and her distance cost increases to $15$, which yields a total decrease of $1$.

In network $G_3$ agent $j$ has edge cost of $4$ and distance cost of $19$ giving total cost $23$. By performing the swap from $i$ to $h$, agent $j$'s edge cost does not change but her distance cost decreases by $1$.

To complete the improving move cycle, note that except for the edge ownership of edge $\{h,i\}$, network $G_4$ is isomorphic to $G_1$, network $G_5$ is isomorphic to $G_2$ and $G_6$ is isomorphic to $G_3$. Since neither agent $h$ nor agent $i$ are part of the improving move cycle, this implies that the described improving moves exist in these networks as well. That is, in network $G_4$ agent $e$ swaps her edge from $i$ back to $h$. In network $G_5$ agent $b$ swaps her edge from $h$ back to $i$ and in network $G_6$ agent $j$ swaps her edge from $h$ back to $i$.
 \end{proof}

\begin{remark}
 The presented improving response cycle in Fig.~\ref{fig:improvingmovecycle} is not a best response cycle for the deg($k$)NCG since in network $G_3$ agent $j$ has a strictly better local move: Buying the edge to agent $h$ and swapping her edge from $i$ to $e$.
\end{remark}

\subsection{Dynamics in the Add-Only Model}\label{sec:dyn_AOG}
We consider dynamics in the deg($k$)AOG. First of all, since agents can only add edges, the deg($k$)AOG trivially has the FIP, i.e. it is an ordinal potential game with the number of bought edges serving as a generalized ordinal potential function.

Since convergence is guaranteed, we focus on investigating the speed of convergence and the quality of the obtained networks. For the latter, Observation~\ref{pro:PoA-AOG} yields a devastating result. However, we contrast this for the deg$2$AOG by proving that if round-robin best single edge dynamics starting on a path as initial network are used, then the social cost is actually close to the best possible achievable social cost.

\begin{theorem}\label{thm:addonly-conver}
Let $P_n = \{v_1 \cdots v_n\}$ be the path of length $n$, with $v_1$ and $v_n$ as leaf nodes, as a initial graph for the deg(k)AOG:
\begin{enumerate}
\item If in any step the active agent is chosen uniformly at random then IRD in the deg(k)AOG converge in $\mathcal{O}(n^3)$ steps in expectation.
\item If in any step the active agent and her improving response is chosen adversarially then IRD in the deg(k)AOG converge in $\Theta(n^2)$ steps.
\item If round-robin best single edge dynamics are used in the deg2AOG, the process converges in at most $\mathcal{O}(n \log n)$ steps to a network with diameter $\mathcal{O}(1)$. 
\end{enumerate}
\end{theorem}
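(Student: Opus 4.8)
The plan is to give all three parts a common backbone. Since the deg($k$)AOG only permits edge additions, the number of edges is a monotone potential bounded by $\binom{n}{2}$, and any improving response must add at least one edge (removals and swaps are impossible). Hence at most $\binom{n}{2}-(n-1)=\mathcal{O}(n^2)$ \emph{productive} moves can occur before an equilibrium is reached, which already settles the upper bound $\mathcal{O}(n^2)$ of part~(2). For part~(1) I would turn this edge budget into an expected‑time bound by a stopping‑time argument: as long as the current network is not an equilibrium at least one agent has an improving response, so a uniformly random activation hits such an agent with probability at least $1/n$, and the number of activations between two consecutive productive moves is dominated by a geometric variable of mean at most $n$. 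Summing over the at most $\mathcal{O}(n^2)$ productive moves (a bounded stopping time) via Wald's identity gives expected total activations $\mathcal{O}(n^2)\cdot n=\mathcal{O}(n^3)$.

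For the matching lower bound $\Omega(n^2)$ of part~(2) I would exhibit an explicit adversarial choice of active agent and improving response performing $\Omega(n^2)$ \emph{single‑edge} additions from $P_n$. Because each such move adds exactly one edge, the terminal network must have $\Omega(n^2)$ edges; and because any network of diameter at most $2$ admits no improving single‑edge purchase (the distance gain is exactly $1$ while the price is the target's degree, which is at least $2$), the target configuration is forced to be a \emph{dense, diameter‑$3$} equilibrium. The construction would build such a network edge by edge, at each step buying toward a currently low‑degree endpoint whose distance‑$3$ mass, as seen from the buyer, makes the purchase strictly profitable, and ordering the $\Theta(n^2)$ additions so that the price (the target's current degree) stays below the induced distance gain at every step.

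For part~(3) I would analyse the round‑robin best‑single‑edge process in the deg$2$AOG by a doubling argument on the diameter. On the path the only $2$‑local purchases available to $v_i$ are edges to $v_{i\pm 2}$, and buying toward the longer side strictly shortens $v_i$'s distance to every node beyond the new shortcut; thus in the first round each agent installs a length‑$2$ shortcut and the diameter drops to about $n/2$. The key lemma is that this contraction iterates: after round $t$ a distance‑$2$ neighbour in the current graph corresponds to a jump of about $2^{\,t}$ along the original path, so every round halves the diameter. After $\mathcal{O}(\log n)$ rounds the diameter is $\mathcal{O}(1)$, at which point no improving $2$‑local single edge remains and the process has converged; since each round costs exactly $n$ activations, the total is $\mathcal{O}(n\log n)$ steps, and the terminal network has diameter $\mathcal{O}(1)$.

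The two steps I expect to be the main obstacles are the lower bound of part~(2) and the within‑round bookkeeping of part~(3). For part~(2) the difficulty is precisely that no diameter‑$\le 2$ target works, so one must design a diameter‑$3$ dense equilibrium \emph{together} with an addition order and verify that every one of the $\Omega(n^2)$ purchases remains strictly improving. For part~(3) the subtlety is that best responses are evaluated on the graph as it changes \emph{within} a round, so the analysis must show that shortcuts created earlier in a round do not destroy the improving moves of later agents but instead reinforce the halving and drive the diameter down to a constant after which no improving $2$‑local edge exists; this is where the bulk of the case analysis will lie.
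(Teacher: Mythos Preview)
Your arguments for part~(1) and the upper bound of part~(2) match the paper's exactly: both use the edge count as a potential bounded by $\binom{n}{2}$, and for the random case both couple the waiting time between productive moves with a geometric variable of parameter~$1/n$.

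For the lower bound of part~(2), you have the right target---a dense diameter-$3$ terminus built by $\Theta(n^2)$ single-edge purchases---and you correctly observe that no diameter-$2$ terminus is possible. But you do not give the construction, and this is precisely where the work lies. The paper's scheme is concrete: for each $i$ up to roughly $n/2$, first $v_i$ buys the edge to $v_{i+2}$, and then every $v_j$ with $j<i$ buys to $v_{i+2}$ as well. Each such purchase is improving because the target's current degree never exceeds $\lceil n/2\rceil$ while the distance gain (all of $v_{i+2},\dots,v_n$ move one step closer) is at least $\lceil n/2\rceil+1$. This nested loop alone yields $\Theta(n^2)$ steps; a short second phase then brings the network to equilibrium.

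For part~(3) there is a genuine gap. Your doubling picture assumes that in every round each agent's best single $2$-local edge is a forward shortcut along the path, so that after round~$t$ a $2$-hop corresponds to an index jump of roughly~$2^t$. That is not what best-response dynamics do. In the first round, only the agents $v_i$ with $i\lesssim (n+5)/2$ buy the forward edge to $v_{i+2}$; once $i$ crosses that threshold the backward edge to $v_{i-4}$ becomes strictly better (the forward gain $n-(i+1)$ has dropped below the backward gain), and the first such purchase lands on $v_{\lceil(n-3)/2\rceil}$. Crucially, this purchase puts $v_{\lceil(n-3)/2\rceil}$ into the $2$-neighbourhood of the \emph{next} agent, and by induction \emph{every} remaining agent in the round buys to that same node, creating a high-degree hub. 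The paper's analysis follows this hub formation: a shrinking ``forward-edge region'' halves each round, a new hub appears near its boundary, and the hubs link into a clique, keeping everything outside the forward region at distance at most~$3$ from it. What you flag as within-round ``bookkeeping'' is in fact a qualitative structural phenomenon (cascading hub formation rather than uniform doubling), and the convergence proof has to track it. Finally, your claim that once the diameter is $\mathcal{O}(1)$ ``no improving $2$-local single edge remains'' is not automatic---constant diameter does not by itself rule out profitable local purchases---and the paper handles the tail with a further $\mathcal{O}(\log n)$ rounds of forward edges toward the hub clique.
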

\begin{proof}
\begin{enumerate}
\item Consider the following procedure: At any time $t\geq 1$ let $G_t=(V,E_t)$ be the graph in the process where $G_0=P_n$, select a node $v\in V$  uniformly at random. If $v$ in $G_t$ has an improving move then $v$ adds a profitable edge. Otherwise it does nothing and the above step is repeated. Since the deg($k$)AOG is an ordinal potential game, we have that
at any time $t$, if $G_t$ is not in equilibrium, then there must be at least one agent having an improving response, i.e. who can buy at least one edge.

Consider the stochastic process $\{X_t\}_{t\geq 0}$, where $X_t= |E_t|$. Now if the graph $G_t$ is not in equilibrium, then the probability $Pr(X_t< X_{t+1})\geq \frac{1}{n}$ and  $Pr(X_t= X_{t+1})=1-Pr(X_t< X_{t+1})$. This implies that the expected number of steps until an improving response is played in the process $\{X_t\}_{t\geq 0}$ is dominated by the geometric random variable \textsc{$G\left(\frac{1}{n}\right)$}. In the process $\{X_t\}_{t\geq 0}$ the absorbing state corresponds to a ($k$)NE network. Since $n-1\leq X_t  \leq \frac{n(n-1)}{2}$, for any $t$, it follows that the expected number of steps needed to reach a ($k$)NE is at most $\mathcal{O}(n^3)$. 

\item Consider the adversarial scheme in Algorithm \ref{Alg:adversary} which is illustrated in Fig.~\ref{fig:advscheme}. First we prove that each addition of an edge induced by this scheme is an improving response. 
\begin{center}
\begin{minipage}{10cm}
\begin{algorithm}[H]
\caption{Adversarial Scheme for the deg($k$)AOG} 
\label{Alg:adversary}
\begin{algorithmic}[1]
\Require undirected path $P_n = (V,E)$ from $v_1$ to $v_n$
\For{$i := 1$ to $\lceil\frac{n}{2}\rceil-3$}
  \State activate $v_i$ and add the edge $\{v_i,v_{i+2}\}$ to $E$ 
	\For{$j= i-1$ to 1}
  	\State activate $v_j$ and add the edge $\{v_j,v_{i+2}\}$ to $E$
	\EndFor 
\EndFor
\For{$i=\lceil\frac{n}{2}\rceil+1$ to $n-2$}
	\State activate $v_{\lceil\frac{n}{2}\rceil-1}$ and add the edge  $\left\{v_{\lceil\frac{n}{2}\rceil-1},v_i\right\}$ 
	\EndFor
\State activate $v_n$ and add the edge $\{v_n, v_{n-2}\}$ to $E$
\If{the model under consideration is the deg$2$AOG} \State go to line 20 \EndIf
\State activate $v_{\lceil\frac{n}{2}\rceil}$ and add the edge $\left\{v_{\lceil\frac{n}{2}\rceil}, v_{n-1}\right\}$ 
\State $i := \lceil\frac{n}{2}\rceil+3$ 
\While{$i\leq n-5$}
	\State activate $v_{n-1}$ and add the edge $\{v_{n-1},v_i\}$ to $E$.
	\State $i := i+3$.
\EndWhile\\
\Return Graph $G=(V,E)$.
\end{algorithmic}
\end{algorithm}
\end{minipage}
\end{center}
Consider line 1 to 6 of the algorithm. At each step of the loop the degree of any node is at most $\lceil\frac{n}{2}\rceil$ and the distance improves to at least $\lceil\frac{n}{2}\rceil+1$. Thus every edge addition in the loop is an improving response for the activated node. 
Now in the next loop, line 7 to 9, only node $v_{\lceil\frac{n}{2}\rceil}$ is active and each of its additions is an improving response since all bought edges have a cost of~2 and the distance cost improvement is $n-i+1 \geq 3$. 
The edge added in line 10 is an improving response as it costs $3$ and the distance cost improvement is at least $n-2$.
\begin{figure}[!h]
  \centering
  \includegraphics[width=13cm]{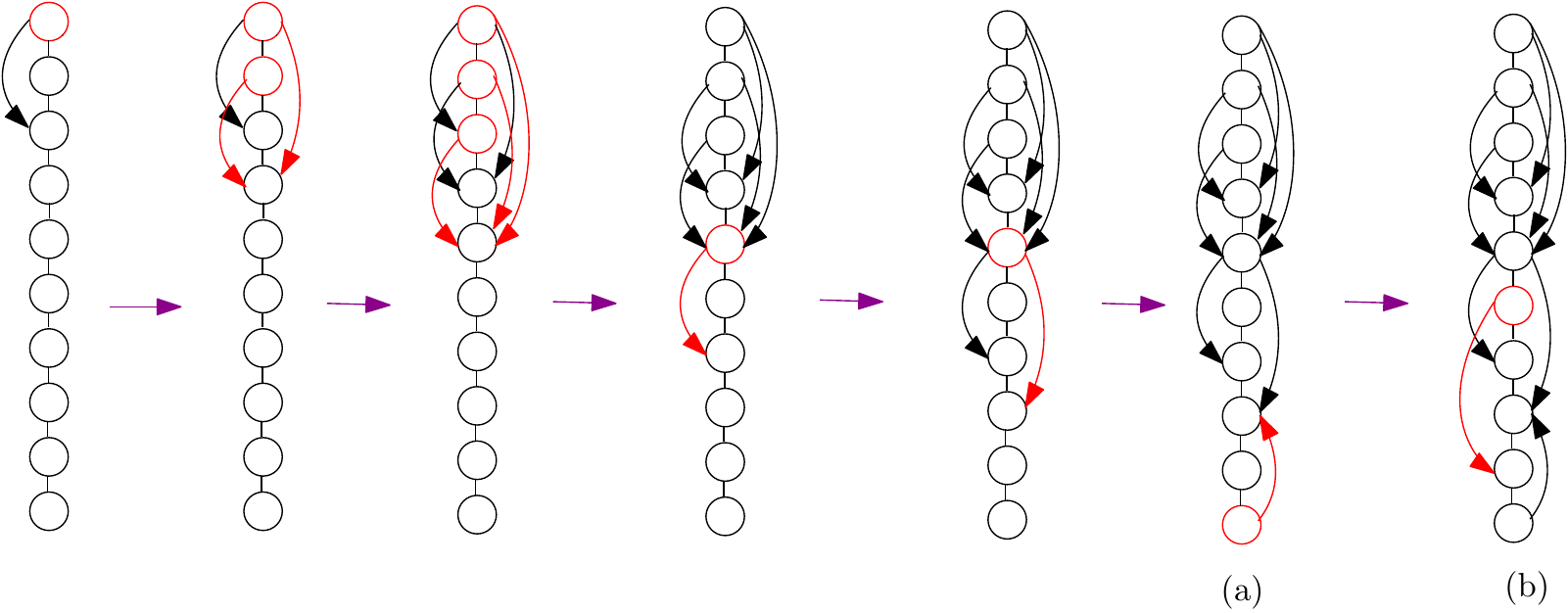}
  \caption{Example of an improving move sequence induced by Algorithm~\ref{Alg:adversary}. (a) is an 2NE in deg$2$AOG and (b) is $(k)$NE in deg$(k)$AOG, where $k\geq 3$.}
  \label{fig:advscheme}
 \end{figure}
 
Now after this step we note that the network is in $2$NE. Indeed, any two nodes from $\{v_1,\ldots, v_{n-2}\}$ are in distance at most 2 to each other and adding an edge to $v_{n-1}$ or $v_n$ improves the distance cost by only $1$. Note that adding the edges $\{v_{n-1},v_{n-3}\}$,$\{v_{n-1},v_{\lceil\frac{n}{2}\rceil-1}\}$, $\{v_{n-1}, v_{n-3}\}$ or $\{v_{n-1},v_{\lceil\frac{n}{2}\rceil-1}\}$ still may be an improvement. But buying an edge to $v_{\lceil\frac{n}{2}\rceil-1}$ costs $n-3$ and the distance cost improvement is at most $n-4$. At the same time adding $(v_{n-1},v_{n-3})$ shortcuts the distance to the nodes $v_{n-3}$ and $v_{n-4}$, thus this move improves the distance cost by 2 and has a cost of at least 3. Hence the graph is in $2$NE after line~10. 

Next, we prove that the remaining edge additions are improving responses in the degAOG. In line 14 the added edge yields an improvement for the active agent since it costs 2 and improves the distance to node $v_{n-1}$ by 2 and to node $v_{n}$ by 1. In the next loop, line 16 to 19, node $v_{n-1}$ is active and it buys an edge to $v_{\lceil \frac{n}{2}\rceil+3}$ and every third node starting from there. These additions are improving responses since the degree of each node $v_i$ is 3 and adding an edge to them improves $v_{n-1}$'s distance to $v_i$, $v_{i-1}$ and $v_{i-2}$ by 2, 1 and 1, respectively. After this step the diameter of the network is 3 and the degree of every node, except $v_n$, is at least 3, hence no agent can improve further, which implies that the network is in $(k)$NE, where $k\geq 3$ .

Note that already in the first loop, line 1 to 6, $\Theta(n^2)$ many improving responses are played.
\item Let $G = P_n$. We call an edge from $v_i$ to $v_j$ if $i<j$ a \emph{forward edge} otherwise, if $i>j$, a \emph{backward edge}. We consider round-robin activation in the order $v_1,v_2,\dots,v_n$. 

Starting with $v_1, v_2, \dots, v_{n-4}$ every agent $v_i$ can buy a forward edge to node $v_{i+2}$ since $deg_{G}(v_{i+2}) = 2$ and the improvement in distance cost is $n-(i+1)$. 
Another possibility for $v_i$ is to buy a backward edge to $v_{i-3}$ then it improves her distance to every second node in $v_1,\dots,v_{i-4}$ and to node $v_{i-3}$ by 1, i.e. distance cost improvement is at most $\frac{i-2}{2}$. 
At the same time buying a backward edge to $v_{i-4}$ costs $4$ and improves the distance cost by $i-4$. Hence, buying a forward edge is the best single edge addition for $v_i$ if $n-(i+1)-2\geq (i-4)-4$. Thus the best single edge addition of agent $v_{\lceil\frac{n+5}{2}\rceil}$ is an edge to $v_{\lceil\frac{n-3}{2}\rceil}$. 

Now agent $v_{\lceil\frac{n+5}{2}\rceil+1}$ can buy an edge to $v_{\lceil\frac{n+5}{2}\rceil+2}, v_{\lceil\frac{n+5}{2}\rceil-3}, v_{\lceil\frac{n+5}{2}\rceil-4}$ or to $v_{\lceil\frac{n+5}{2}\rceil-5}$. 
Observe that buying a backward edge to the node with the lowest possible index maximizes the distance cost improvement.
Adding the edge $\{v_\lceil\frac{n+5}{2}\rceil,v_{\lceil\frac{n+5}{2}\rceil-5}\}$  costs $5$ and it improves the distance towards all nodes $v_1,\ldots,v_{\lceil\frac{n+5}{2}\rceil-5}$ by 1. 
This yields a higher improvement than buying a forward edge since $\lceil\frac{n+5}{2}\rceil-4+1-5\geq n-(\lceil\frac{n+5}{2}\rceil+1)-2\Longleftrightarrow 2\cdot\lceil\frac{n+5}{2}\rceil\geq n+5$. Thus, adding $(v_{\lceil\frac{n+7}{2}\rceil},v_{\lceil\frac{n-3}{2}\rceil})$ is a best single edge purchase. 

For nodes $v_{\lceil\frac{n+7}{2}\rceil}$ and $v_{\lceil\frac{n+9}{2}\rceil}$  adding an edge to the same node $v_{\lceil\frac{n-3}{2}\rceil}$ saves $\lceil\frac{n-3}{2}\rceil+1$ and $\lceil\frac{n-3}{2}\rceil+2$ in distance cost, respectively, and the edge cost is $5$ and $6$, respectively. For all the other nodes with higher index buying an edge to node $v_{\lceil\frac{n-3}{2}\rceil}$ saves $\lceil\frac{n-3}{2}\rceil+3+(x-2)$ whereas edge costs are $x+4$, where $x$ is the number of nodes which already bought a backward edge to $v_{\lceil\frac{n-3}{2}\rceil}$. Since buying to a node with larger index or buying a forward edge is inferior, it follows that buying an edge to $v_{\lceil\frac{n-3}{2}\rceil}$ is a best single edge purchase for all agents with index larger than $\lceil\frac{n+5}{2}\rceil$.

So node $v_{\lceil\frac{n-3}{2}\rceil}$ becomes a high degree node with $deg_G(v_{\lceil\frac{n-3}{2}\rceil}) = n-\lceil\frac{n+5}{2}\rceil+1+4=\lfloor\frac{n+5}{2}\rfloor$. In the following we call nodes having a degree linear in $n$ a \emph{high degree node}.  
Consider the diameter of the network $D(G)$ after the first round. The length of the shortest path from $v_1$ to $v_{\lceil\frac{n-3}{2}\rceil}$ is halved by the forward edges and since every node between $v_{\lceil\frac{n+5}{2}\rceil}$ and $v_{n}$ buys an edge to $v_{\lceil\frac{n-3}{2}\rceil}$ the longest shortest path between the nodes $\{v_{\lceil\frac{n-3}{2}\rceil},\ldots,v_n\}$ is 3 so $D(G)$ decreases from $n-1$ to $\lceil\frac{n+9}{4}\rceil$. 

In the next round the best single improving edge of the first nodes $v_i$, where $i\leq \lceil\frac{n-11}{2}\rceil$ is an edge to $v_{i+4}$, since every edge costs $4$ and distance cost improvement is $n-(i+3)$. Buying a backward edge to $v_{i-8}$ has cost $6$ and improves the distance cost by $i-8+1$. 

Let the \emph{forward edge region} of any round be the set of nodes which have a best single edge purchase which buys a forward edge. In round $k$ no node up to node
$v_{\lceil\frac{n+2^k+2}{2}\rceil}$ will buy a backward edge, except when the forward edge would connect to a high degree node, since a forward edge from $v_{i}$ to $v_{i+2^k}$ costs $2k$ and improves the distance cost by $n-(i+2^k-1)$ while a backward edge from $v_{i}$ to $v_{i-2^{k+1}}$ costs $2k+2$ and the distance cost improvement is $i-2^{k+1}+1$.
It follows that the diameter of the forward edge region halves every round.

Now we consider the connection between high degree nodes during each round. In the second round the first backward edge is bought by agent $v_{\lceil\frac{n-11}{2}\rceil}$, since buying a forward edge to the high degree node $v_{\lceil\frac{n-3}{2}\rceil}$ is clearly too expensive. When we are in round $k$ a new high degree node $v_i$, with $i \geq \lceil\frac{n+21}{2}\rceil-3\cdot2^{k+1}$ to which other nodes buy backward edges will occur.

Since there are at most $2^k$ nodes between the first node who starts to buy a backward edge and the high degree node from the last round and to the next high degree nodes there are at most $3\cdot 2^{k-1}$, $3\cdot 2^{k-2}$ etc. nodes which lie in-between, there can be at most $\mathcal{O}(\log n)$ high degree nodes. Consider the high degree node with lowest index in round $k\geq 2$ and consider the step where the high degree node with the next higher index is active. At this point, all nodes between the two high degree nodes have bought a backward edge to the high degree node with lowest index, which implies that the high degree node with the second lowest index can buy a backward edge as well. Since not all nodes with lower index than the high degree node with lowest index have bought a forward edge to the high degree node with lowest index, it follows that the best single edge of the high degree node with the second lowest index connects to the high degree node with lowest index. Analogously, all high degree nodes with higher index will also buy an edge to the high degree node with lowest index. Thus, all high degree nodes after any round will form a clique. It follows that the distance of any node with higher index than the current high degree node with lowest index to the latter node is at most $3$.   

\begin{figure}[!h]
  \centering
  \includegraphics[angle = 90, width=\textwidth]{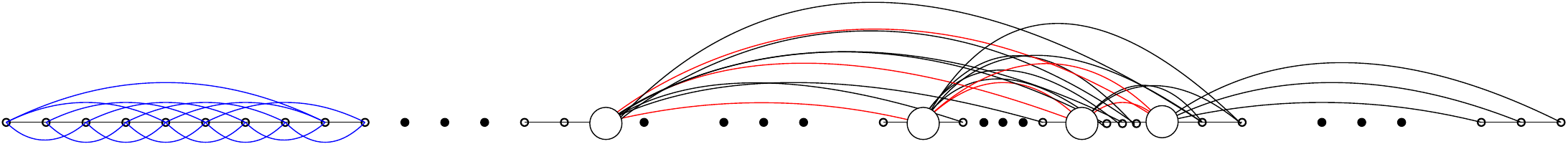}
  \caption{A sketch of the graph $G$ in round $k$. The forward edge region, where the best single edge addition is to buy a forward edge, is shown in blue. The large nodes show the high degree nodes and the red edges belong to the clique formed by the high degree nodes.}
  \label{fig:roundrobinsketch}
 \end{figure}

Since the diameter of the forward edge region halves in every round there can be at most $\mathcal{O}(\log n)$ rounds until the diameter of the network is in $\mathcal{O}(1)$. At this point all nodes with lower index than the lowest index of the clique nodes can possibly buy an improving forward edge to all clique nodes but no improving backward edge can be bought by any agent. Thus, there can be $\mathcal{O}(\log n)$ additional rounds and the total number of additional edges is in $\mathcal{O}(n\log n)$.  

In total the dynamics needed $\mathcal{O}(\log n)$ rounds and $O(n \log n)$ best single edge improvements are made.
\end{enumerate}
\end{proof}
\noindent We contrast the upper bounds by showing that convergence in $\mathcal{O}(n)$ many improving responses is possible.
\begin{theorem}\label{thm:converge-best}
Let $P_n$ be the initial network then there exists a sequence of improving responses which takes
\begin{enumerate}
\item  $n-2+\frac{n-7}{3}$ steps to obtain a NE network in the degAOG;
\item $n-1$ steps to obtain a 2NE network in the deg2AOG.
\end{enumerate}
\end{theorem}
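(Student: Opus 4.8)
The plan is, for each of the two statements, to exhibit an explicit sequence of edge purchases and then verify two things: that every purchase in the sequence is a strictly improving response for the agent who performs it, and that the network reached at the end of the sequence lies in the required equilibrium. The two parts differ only in the target equilibrium. In the deg2AOG it suffices to reach a $2$NE, which by Theorem~\ref{thm_diam_LNE} may still have diameter as large as $\mathcal{O}(\sqrt n)$ and in which only $2$-local deviations must be blocked; in the degAOG one needs a genuine NE, which by Corollary~\ref{thm:diameter} forces diameter at most $3$ and, since purchases are not restricted to be local, must be robust against arbitrary deviations, in particular against cheap long-range edges to low-degree vertices. This difference is exactly why the degAOG construction needs more than $n-2$ moves.

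For the deg2AOG I would pick one endpoint of $P_n$, say $v_1$, as a hub and have it buy, one move at a time, the edge to the vertex that is currently at distance exactly two from it along the surviving path: first $v_3$, then $v_4$, then $v_5$, and so on. Every such purchase is $2$-local by construction, and it is improving because at the moment of purchase the target still has degree at most $2$, so the edge costs at most $2$, whereas the distance-cost gain is at least the number of path vertices lying beyond the target, which remains large until the very end. Once the hub has absorbed the path in this way, a constant number of final tidy-up purchases (for instance having the last vertex buy the edge to its grand-neighbor, so that no vertex is left with too small a degree) produce an almost-universal vertex $v_1$ together with the residual path edges; a short case analysis then shows that every surviving candidate $2$-local deviation is a purchase of an edge to a degree-$2$ neighbor of a neighbor whose distance gain does not exceed the edge price, so the network is in $2$NE. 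Counting the moves gives the bound $n-1$.

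For the degAOG I would reuse the hub construction but, because the final network must have diameter $3$ and withstand arbitrary deviations, follow the hub phase by a patch-up phase: insert one extra shortcut edge to roughly every third vertex of the portion of the path that has not yet been shortcut. These edges serve a double purpose — they pull the diameter down to $3$ and they raise the degrees of the vertices that would otherwise be cheap and attractive targets for a long-range purchase. Each patch-up edge is again checked to be improving; this is the tightest point, because a skip edge to a degree-$2$ vertex costs $2$ and hence needs a distance-cost gain of at least $3$. Adding up the moves of the two phases yields exactly $n-2+\frac{n-7}{3}$.

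I expect the real work to be the terminal equilibrium verification rather than the move count, and in the degAOG especially so: one must argue, vertex by vertex and using the precise degrees and distances of the terminal network, that no bundle of purchased edges strictly decreases any agent's cost, and the margins are delicate — many natural deviations come out with net cost change exactly $0$, so that a single miscounted vertex at distance $3$ would turn such a deviation into a strict improvement and break the proof. By contrast the ``every move is improving'' part reduces to choosing the right order together with the elementary observation that a distance-cost gain of $n-i-1$ beats an edge price of $2$ whenever $i\le n-4$.
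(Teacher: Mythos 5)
Your construction coincides with the paper's: in the deg$2$AOG the hub $v_1$ buys edges to $v_3,\dots,v_{n-2}$ one at a time (each purchase being $2$-local and costing at most $2$ against a large distance gain) followed by the single tidy-up edge $\{v_n,v_{n-2}\}$, and in the degAOG the same hub phase is followed by shortcuts from the far end to every third vertex, exactly as in the paper, with the same move counts. The only piece you announce but do not execute is the terminal equilibrium verification, which the paper dispatches in a few lines of degree-versus-distance arithmetic, so the proposal is essentially the paper's proof in outline.
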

\begin{proof}
\begin{enumerate}
\item Consider the path $P_n = v_1,\cdots,v_n$. We activate $v_1$ and sequentially buy the edges $\{v_1,v_i\}$ for each $3\leq i\leq n-2$. Afterwards we activate $v_{n-1}$ and sequentially buy the edges $\{v_{n-1},v_{i+3}\}$ with $0\leq i\leq n-8$. Each edge addition is an improving move since $v_1$ pays $2$ to improve the distance cost by at least 3 and $v_n$ pays 3 and improves her distance cost by 4 to the nodes $v_{i+2},$ $v_{i+3}$ and $v_{i+4}$. After that no edge addition is possible as each edge costs at least 3 whereas the possible distance cost improvement is at most 3. Thus the network is in NE after $n-2+\frac{n-7}{3}=\Theta(n)$ many steps. 

\item Consider the following activation scheme in which node $v_1$ buys an edge to each node $v_i$ where $3\leq i\leq n-2$. Every single edge purchase is an improving response because the edge cost is $2$ and the distance cost improvement is at least $3$. Next, we activate $v_n$ and add the edge $\{v_n,v_{n-2}\}$. This is improving for $v_n$, since the edge costs $3$ and it decreases the distance cost by $n-2$. 

All nodes $\{v_1, \dots, v_{n-2}\}$ are in distance at most 2 towards each other, thus there are no possible further improving edge additions between these agents. 
Buying an edge to $v_{n-1}$ and $v_{n}$ costs 2 and 1, respectively, whereas the distance cost improvement is also $2$ and $1$, respectively. 
The nodes $v_{n-1}$ and $v_{n}$ do not want to buy an edge to $v_1$ as it costs $n-3$ and the distance cost improvement is $n-4$ and there is no other local improving response for both nodes as each node in their 2-neighborhood costs 3 and the distance cost improvement is just 1.     
Therefore the network is in $2$NE after $n-1$ many steps.  
\end{enumerate}
\end{proof}
\noindent Finally, we investigate the quality of the ($2$)NE networks which can be obtained by improving move dynamics starting from the path $P_n$. 
For this, we introduce a measure which is similar to the Price of Anarchy. Let $G_0$ be any initial connected network and let $Z(G_0)$ be the set of networks which can be obtained via improving response dynamics in the deg($2$)AOG. Let $Best(G_0) \in Z(G_0)$ be the reachable network with minimum social cost among all networks in $Z(G_0)$. We can now measure the quality of any network $G \in Z(G_0)$ by investigating the ratio $\rho(G,G_0) = \frac{cost(G)}{cost(Best(G_0))}$. 
\begin{theorem}\label{thm:rho-dynamics}
~
\begin{enumerate}
\item Let $G$ be any network in $Z(G_0)$ then $\rho(G,G_0) \in \mathcal{O}(n).$ 
\item There is a network $G \in Z(P_n)$ for the deg(2)AOG with $\rho(G,P_n) \in \Theta(n)$. 
\item Let $G^*$ be the network obtained by the round-robin best single edge dynamics in the deg2AOG, then we have $\rho(G^*,P_n) \in \mathcal{O}(\log n)$.
\end{enumerate}
\end{theorem}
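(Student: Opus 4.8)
The plan is to handle the three parts separately, reusing the diameter and edge-cost machinery already developed. For part~1, recall from Lemma~\ref{lem:optdeg} that the optimal social cost is $\Theta(n^2)$, and in particular $cost(Best(G_0)) \geq cost(\text{opt}) \in \Omega(n^2)$. Any network $G \in Z(G_0)$ is reachable by adding edges, so it is connected and has at most $\binom{n}{2}$ edges; its social distance cost is at most $n^2(n-1) \in \mathcal{O}(n^3)$ and, since every edge costs at most $n-1$, its social edge cost is also $\mathcal{O}(n^3)$. Hence $cost(G) \in \mathcal{O}(n^3)$ and $\rho(G,G_0) = cost(G)/cost(Best(G_0)) \in \mathcal{O}(n^3)/\Omega(n^2) = \mathcal{O}(n)$. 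This gives the universal upper bound.

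For part~2, I would exhibit a concrete improving-response sequence starting from $P_n$ that terminates in a sparse, large-diameter $2$NE, so that the reachable network $Best(P_n)$ is cheap (by Theorem~\ref{thm:converge-best}(2) there is a reachable $2$NE of social cost $\Theta(n^2)$, since the network built there has constant diameter and $\mathcal{O}(n)$ edges) while the network $G$ we reach has social cost $\Theta(n^3)$. A natural candidate for $G$ is the $2$NE of Figure~\ref{D5_degLAOG} or a similar ``long chain of small clusters'' reachable from the path by only local edge additions that keep the diameter $\Theta(n)$; its distance cost is then $\Theta(n^3)$. Dividing by $cost(Best(P_n)) = \Theta(n^2)$ yields $\rho(G,P_n) \in \Omega(n)$, and combined with part~1 this is $\Theta(n)$. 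The main work here is to verify that such a $G$ is genuinely reachable from $P_n$ via a valid IRD sequence in the deg$2$AOG --- i.e., that at every step the added edge strictly decreases the active agent's cost --- and that no cheaper reachable network forces a larger denominator; the latter is fine because $cost(Best(P_n))$ is already pinned to $\Theta(n^2)$ from below by $\text{opt}_n$ and from above by Theorem~\ref{thm:converge-best}.

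For part~3, I invoke Theorem~\ref{thm:addonly-conver}(3): the round-robin best single edge dynamics starting from $P_n$ converge in $\mathcal{O}(n\log n)$ steps to a network $G^*$ with diameter $\mathcal{O}(1)$. The social distance cost of $G^*$ is therefore $\mathcal{O}(n^2)$. For the edge cost, the dynamics add $\mathcal{O}(n\log n)$ edges in total, and each bought edge's price is at most $n-1$, so a crude bound gives social edge cost $\mathcal{O}(n^2 \log n)$; a more careful accounting (few high-degree ``hub'' nodes, to which the forward-edge regions and the hub-clique edges attach, while all other nodes retain degree $\mathcal{O}(\log n)$) shows that the number of edges incident to high-degree nodes is $\mathcal{O}(n)$ and the rest contribute degree cost $\mathcal{O}(\log n)$ each, giving social edge cost $\mathcal{O}(n^2 \log n)$ dominated by at most a logarithmic overhead. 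Hence $cost(G^*) \in \mathcal{O}(n^2 \log n)$, and dividing by $cost(Best(P_n)) = \Omega(n^2)$ gives $\rho(G^*,P_n) \in \mathcal{O}(\log n)$.

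The step I expect to be the main obstacle is part~2: constructing an explicit reachable $2$NE network of social cost $\Theta(n^3)$ and checking carefully that every edge purchase along the way is a strict improvement for the acting agent (local, cost-decreasing, and not blocked by the add-only restriction). The other two parts reduce almost immediately to the diameter bounds and the convergence theorem already established.
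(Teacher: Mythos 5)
Parts~1 and~3 of your plan are correct and essentially identical to the paper's argument: bound $cost(G)$ by $\mathcal{O}(n^3)$ against the trivial $\Omega(n^2)$ lower bound on $cost(Best(G_0))$ for part~1, and for part~3 combine the $\mathcal{O}(1)$ diameter and the $\mathcal{O}(n\log n)$ edge count from Theorem~\ref{thm:addonly-conver}(3) to get $cost(G^*)\in\mathcal{O}(n^2\log n)$. Your upper bound of $\Theta(n^2)$ on $cost(Best(P_n))$ via Theorem~\ref{thm:converge-best} is also exactly what is needed.

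Part~2, however, has a genuine gap, and it is precisely the step you flagged as the main obstacle. You propose to realize the $\Theta(n^3)$ social cost of the bad reachable network $G$ through its \emph{distance} cost, by reaching a sparse $2$NE of diameter $\Theta(n)$ (e.g.\ a long chain of small clusters in the spirit of Fig.~\ref{D5_degLAOG}). No such network exists: the proof of Theorem~\ref{thm_diam_LNE} uses only edge-addition deviations within distance $2$, so it applies verbatim to equilibria of the deg$2$AOG, and every terminal network of IRD from $P_n$ is such an equilibrium. Hence any $2$NE you can terminate in has diameter $\mathcal{O}(\sqrt{n})$ and distance cost at most $\mathcal{O}(n^{2.5})$, which is not enough for an $\Omega(n)$ ratio. (Fig.~\ref{D5_degLAOG} is a fixed constant-size example with $D=5$, not a family with growing diameter.) The correct source of the $\Theta(n^3)$ cost is the \emph{edge} cost, not the distance cost: the paper takes $G$ to be the terminal network of the adversarial scheme of Theorem~\ref{thm:addonly-conver}(2), which is dense --- it has $\Theta(n^2)$ edges, a linear fraction of which are bought towards nodes of degree $\Theta(n)$ --- so its social edge cost alone is $\Theta(n^3)$, and every edge addition along the way has already been verified to be an improving response in that proof. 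Dividing by $cost(Best(P_n))\in\Theta(n^2)$ then gives $\rho(G,P_n)\in\Omega(n)$ as desired. So your overall skeleton for part~2 (cheap reachable $Best(P_n)$ versus an expensive reachable $G$) is right, but the witness $G$ must be a dense, expensive-edge network rather than a sparse, large-diameter one.
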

\begin{proof}
\begin{enumerate}
\item To give the upper-bound on $\rho(G,G_0)$ we give upper-bound on the social cost of $G$. Let $D$ be the diameter of $G$ then the total distance cost of all agents is at most $D n^2 \leq n^3$. The total edge cost of all agents is also is in $\mathcal{O}(n^3)$. The social of $Best(G_0)$ is at least $n(n-1)$, since the diameter of $Best(G_0)$ is at least~$1$. Hence,
$\rho(G,G_0)\in {\mathcal{O}(n)}$.
\item The upper-bound follows from (1). The matching lower bound follows from the example network $G$ given in the second part of the proof of Theorem~\ref{thm:addonly-conver}. In $G$ there are $\Theta(n^2)$ many edges which all have cost in $\Theta(n)$ which yields social cost in $\Theta(n^3)$.
Now consider the activation scheme in the proof of Theorem~\ref{thm:converge-best}. In the constructed equilibrium network $G'$ the total edge cost is in $\Theta(n)$. Thus the social cost of $G'$ is in $\Theta(n^2)$, which is an upper bound on the social cost of $Best(P_n)$. Hence, the $\rho(G,P_n) \in \Omega(n)$. 
\item  We prove the upper bound on $\rho(G^*,P_n)$ by giving the upper bound on the social cost of $G^*$ and the lower bound on $Best(P_n)$.

From the third part of Theorem~\ref{thm:addonly-conver} we know that the total number of edges and diameter of $G^*$, is in $\mathcal{O}(n\log n)$ and $\mathcal{O}(1),$ respectively. Since the diameter of the $G^*$ is in $\mathcal{O}(1)$ the total distance cost can be upper bounded by $\mathcal{O}(n^2)$. The total edge cost of $G^*$ is upper bounded by $\mathcal{O}(n^2 \log n)$, since there are $\mathcal{O}(\log n)$ many high degree nodes with degree $\Theta(n)$ and almost all edges are bought towards these high degree nodes and thus each have cost in $\Theta(n)$. Hence, $G^*$ has a social cost in $\mathcal{O}(n^2\log n)$. 

On the other hand, a trivial lower bound on the social cost of $Best(P_n)$ is $n(n-1)$, which then yields $\rho(G^*,P_n) \in \mathcal{O}(\log n)$.
\end{enumerate}
\end{proof}

\section{Conclusion}
We have introduced natural variants of the well-known NCG by Fabrikant et al.~\cite{Fab03}, which have the distinctive features that they are parameter-free and at the same time incorporate non-uniform edge costs. Besides proving that computing a best response is \NP-hard and that improving response dynamics may never converge to an equilibrium, we have also established that the degNCG has a constant Price of Anarchy. This strong statement holds whenever the edge price is any linear function of the degree of the non-owner endpoint of the edge or if agents are allowed to buy edges to nodes in their $4$-neighborhood. For the version which includes stronger locality, i.e. the deg$2$NCG, we have shown that the PoA is in $\mathcal{O}(\sqrt{n})$ and, as a contrast, for the add-only version the PoA is in $\Theta(n)$. We also demonstrate how to circumvent the latter negative result by using suitable activation schemes on a sparse initial network.  

Studying the bilateral version of our model, where both endpoints of the edge have to agree and pay proportionally to the degree of the other endpoint for establishing an edge, is an obvious future research direction. For this version, we have already established that most of our proofs can be easily adapted, which implies that our results, with minor modifications, still hold. Another interesting extension would be to consider an edge price function which depends on the degree of \emph{both} involved nodes. This could be set up such that edges between nodes of similar degree are cheap and edges become expensive when the degree of both nodes differs greatly.

\bibliographystyle{abbrv}
\bibliography{degpriceNCG}

\newcommand{\SortNoop}[1]{}
\begin{thebibliography}{10}

\bibitem{Al14}
S.~Albers, S.~Eilts, E.~Even-Dar, Y.~Mansour, and L.~Roditty.
\newblock On nash equilibria for a network creation game.
\newblock {\em ACM TEAC}, 2(1):2, 2014.

\bibitem{AK00}
P.~Alimonti and V.~Kann.
\newblock Some apx-completeness results for cubic graphs.
\newblock {\em Theoretical Computer Science}, 237:123--134, 2000.

\bibitem{ADHL13}
N.~Alon, E.~D. Demaine, M.~T. Hajiaghayi, and T.~Leighton.
\newblock Basic network creation games.
\newblock {\em SIAM Journal on Discrete Mathematics}, 27(2):656--668, 2013.

\bibitem{ABDMS16}
C.~{\`A}lvarez, M.~J. Blesa, A.~Duch, A.~Messegu{\'e}, and M.~Serna.
\newblock Celebrity games.
\newblock {\em Theoretical Computer Science}, 648:56--71, 2016.

\bibitem{AM16}
C.~{\`A}lvarez and A.~Messegu{\`e}.
\newblock Max celebrity games.
\newblock In {\em WAW'16}, pages 88--99. Springer, 2016.

\bibitem{BG00}
V.~Bala and S.~Goyal.
\newblock A noncooperative model of network formation.
\newblock {\em Econometrica}, 68(5):1181--1229, 2000.

\bibitem{Bar16}
A.-L. Barab{\'a}si.
\newblock {\em Network science}.
\newblock Cambridge University Press, 2016.

\bibitem{Bil14local}
D.~Bil\`{o}, L.~Gual\`{a}, S.~Leucci, and G.~Proietti.
\newblock Locality-based network creation games.
\newblock In {\em {SPAA}'14}, pages 277--286, New York, NY, USA, 2014. ACM.

\bibitem{Bil14traceroute}
D.~Bil\`{o}, L.~Gualà, S.~Leucci, and G.~Proietti.
\newblock Network creation games with traceroute-based strategies.
\newblock In {\em SIROCCO'14}, pages 210--223. 2014.

\bibitem{CP05}
J.~Corbo and D.~Parkes.
\newblock The price of selfish behavior in bilateral network formation.
\newblock In {\em PODC'05}, pages 99--107, New York, NY, USA, 2005. ACM.

\bibitem{CL15}
A.~Cord{-}Landwehr and P.~Lenzner.
\newblock Network creation games: Think global - act local.
\newblock In {\em {MFCS'15}}, pages 248--260, 2015.

\bibitem{CMadH14}
A.~Cord-Landwehr, A.~M{\"a}cker, and F.~M. auf~der Heide.
\newblock Quality of service in network creation games.
\newblock In {\em WINE'14}, pages 423--428. Springer, 2014.

\bibitem{De07}
E.~D. Demaine, M.~T. Hajiaghayi, H.~Mahini, and M.~Zadimoghaddam.
\newblock The price of anarchy in network creation games.
\newblock {\em ACM Trans. on Algorithms}, 8(2):13, 2012.

\bibitem{DKH11}
D.-Z. Du, K.-I. Ko, and X.~Hu.
\newblock {\em Design and Analysis of Approximation Algorithms}.
\newblock Springer Publishing Company, Incorporated, 2011.

\bibitem{Fab03}
A.~Fabrikant, A.~Luthra, E.~Maneva, C.~H. Papadimitriou, and S.~Shenker.
\newblock On a network creation game.
\newblock PODC'03, pages 347--351, New York, USA, 2003. ACM.

\bibitem{GHLL16}
R.~Graham, L.~Hamilton, A.~Levavi, and P.-S. Loh.
\newblock Anarchy is free in network creation.
\newblock {\em ACM Transactions on Algorithms (TALG)}, 12(2):15, 2016.

\bibitem{JW96}
M.~O. Jackson and A.~Wolinsky.
\newblock A strategic model of social and economic networks.
\newblock {\em Journal of economic theory}, 71(1):44--74, 1996.

\bibitem{KL13}
B.~Kawald and P.~Lenzner.
\newblock On dynamics in selfish network creation.
\newblock In {\em {SPAA'13}}, pages 83--92. ACM, 2013.

\bibitem{L11}
P.~Lenzner.
\newblock On dynamics in basic network creation games.
\newblock In G.~Persiano, editor, {\em SAGT'11}, volume 6982, pages 254--265.
  Springer Berlin / Heidelberg, 2011.

\bibitem{Len12}
P.~Lenzner.
\newblock Greedy selfish network creation.
\newblock In {\em WINE'12}, pages 142--155. Springer, 2012.

\bibitem{LKF05}
J.~Leskovec, J.~Kleinberg, and C.~Faloutsos.
\newblock Graphs over time: densification laws, shrinking diameters and
  possible explanations.
\newblock In {\em SIGKDD'05}, pages 177--187. ACM, 2005.

\bibitem{MMM13}
A.~Mamageishvili, M.~Mihal\'{a}k, and D.~M\"uller.
\newblock Tree nash equilibria in the network creation game.
\newblock In {\em WAW'13}, volume 8305 of {\em LNCS}, pages 118--129. Springer,
  2013.

\bibitem{MMO14}
E.~A. Meirom, S.~Mannor, and A.~Orda.
\newblock Network formation games with heterogeneous players and the internet
  structure.
\newblock In {\em EC'14}, pages 735--752. ACM, 2014.

\bibitem{MS10}
M.~Mihal\'{a}k and J.~C. Schlegel.
\newblock The price of anarchy in network creation games is (mostly) constant.
\newblock SAGT'10, pages 276--287, Berlin, Heidelberg, 2010. Springer.

\bibitem{MS12}
M.~Mihal\'{a}k and J.~C. Schlegel.
\newblock Asymmetric swap-equilibrium: A unifying equilibrium concept for
  network creation games.
\newblock In {\em MFCS'12}, pages 693--704. 2012.

\bibitem{MS96}
D.~Monderer and L.~S. Shapley.
\newblock Potential games.
\newblock {\em Games and Economic Behavior}, 14(1):124 -- 143, 1996.

\bibitem{Pap01}
C.~H. Papadimitriou.
\newblock Algorithms, games, and the internet.
\newblock In {\em Proceedings on 33rd Annual {ACM} Symposium on Theory of
  Computing}, pages 749--753, 2001.

\end{thebibliography}

\end{document}